\newcommand{\algolabel}[1]{\newcounter{#1} \setcounter{#1}{\value{ALC@line}}}
\newcommand{\algoref}[1]{\arabic{#1}}
\newenvironment{proof}{\noindent{\bf Proof.}}{\hfill$\Box$\medskip}
\newenvironment{proofsketch}{\noindent{\bf Proof sketch.}}{\hfill$\Box$}
\newtheorem{theorem}{Theorem}
\newtheorem{lemma}{Lemma}
\newtheorem{definition}{Definition}
\newcommand{\T}{\hspace{0.3cm}}   
\newcommand{\TT}{\hspace*{2em}}
\newcommand{\TTT}{\hspace*{3em}}
\newcommand{\co}[1]{\emph{/* #1 */}}
 \newcommand{\act}[1]{\mbox{#1}}
\newcommand{\tasktuple}[3]{\langle #1,#2,#3 \rangle}
\newcommand{\whp}{\textit{whp}}
\newcommand{\remove}[1]{}
\newtheorem{assert}{Assertion}[section]
\newcommand{\ENed}{enlightened}
\newcommand{\p}[1]{\smallskip\noindent{\bf #1}}
\newcommand{\polylog}[1]{{\it poly}(\log #1)}
\newcommand{\M}[1]{${\cal F}\!\!_{#1}$}
\newcommand{\alg}[1]{{\sc daks}$ #1 $}
\newcommand{\epoch}[1]{epoch~$\!\mathfrak{#1}$}
\newcommand*{\textlabel}[2]{%
  \edef\@currentlabel{#1}
  \phantomsection
  #1\label{#2}
}
\newcommand{\const}[1]{\mathfrak{#1}}
\begin{document}


\title{Technical Report:
Dealing with Undependable Workers\\ in Decentralized Network Supercomputing\thanks{
This work is supported in part by the NSF award 1017232.}
}

\author{
  Seda Davtyan$^*$%
 	 \thanks{$^*$
 	      Department of Computer Science \& Engineering,
         University of Connecticut,
         371 Fairfield Way, Unit 4155,
 	      Storrs CT 06269, USA.
         Emails: {\tt \{seda,acr,aas\}@engr.uconn.edu}.
  } 
 ~ ~ Kishori M. Konwar$^\dag$%
  \thanks{$^\dag$
           University of British Columbia,
           Vancouver, BC V6T 1Z3, CANADA
           Email: {\tt kishori@interchange.ubc.ca}.
 }
~ ~ Alexander Russell$^*$
 ~ ~ Alexander A. Shvartsman$^*$
  }

\date{}


\thispagestyle{empty}
\maketitle

\begin{abstract}
Internet supercomputing is an approach to solving partitionable,
computation-intensive problems by harnessing the power of
a vast number of interconnected computers.
This paper presents a new algorithm
for the problem of using network supercomputing to perform
a large collection of independent tasks, 
while dealing with undependable processors.
The adversary may cause the processors to return
bogus results for tasks with certain probabilities, and may cause
a subset $F$ of the initial set of processors $P$ to crash.
The adversary is constrained in two ways. 
First, for the set of non-crashed processors $P-F$, the \emph{average} probability 
of a processor returning a bogus result is inferior to $\frac{1}{2}$.
Second, the adversary may crash a subset of processors $F$,
provided the size of $P-F$ is bounded from below.
We consider two models: the first bounds the size of $P-F$
by a fractional polynomial, the second bounds this size
by a poly-logarithm.
Both models yield adversaries that are much stronger than previously studied.
Our randomized synchronous algorithm is formulated for $n$ processors
and $t$ tasks, with $n\le t$, where
depending on the number of crashes each
live processor is able to terminate dynamically
with the knowledge that the problem is solved with high probability. 
For the adversary constrained  by a fractional polynomial, the round complexity
of the algorithm is
$O(\frac{t}{n^\varepsilon}\log{n}\log{\log{n}})$, its work is
$O(t\log{n} \log{\log{n}})$ and message complexity is
$O(n\log{n}\log{\log{n}})$.
For the poly-log constrained adversary, the round complexity is
$O(t)$, work is 
$O(t n^{\varepsilon})$,
and message complexity is 
$O(n^{1+\varepsilon})$
All bounds are shown to hold with high probability.
\end{abstract}



\section{Introduction}\label{intro}

Cooperative network supercomputing is becoming increasingly popular for harnessing the power
of the global Internet computing platform. A typical Internet supercomputer, e.g., \cite{DISNET,SETIhome}, 
consists of a master computer  and a large number of computers called workers, performing 
computation on behalf of the master. Despite the simplicity and benefits of a single master approach, 
as the scale of such computing environments grows, it becomes unrealistic to assume the existence 
of the infallible master that is able to coordinate the activities of multitudes of workers. 
Large-scale distributed systems are inherently dynamic and are subject to perturbations, such as 
failures of computers and network links, thus it is also necessary to consider fully distributed peer-to-peer solutions.

One could address the single point of failure issue by providing
redundant multiple masters, yet this would remain a centralized scheme
that is not suitable for big data processing
that involves a large amount of input and output data.
For example, consider  applications
in molecular biology that require 
large reference databases of gene models or annotated protein sequences,
and large sets of unknown protein sequences~\cite{HKWH2014}. 
Dealing with such voluminous 
data 
requires a large scale platform
providing the necessary computational
power and storage. 

Therefore, a more scalable approach is to use a decentralized system, where the input
is distributed and, once the processing is complete, 
the output is distributed across multiple nodes. 
Interestingly, computers returning bogus results is a phenomenon 
of increasing concern. While this may  occur unintentionally,
e.g., as a result of over-clocked processors, workers may in fact
wrongly claim to have performed 
assigned work so as to obtain incentives associated with the system,
e.g.,  higher rank.
To address this problem,
several works, e.g., 
\cite{NCA2011,FGLS2005b,FGLS2005,KRS2006}, study
approaches based on a reliable master 
coordinating unreliable workers. 
The drawback in these approaches is the reliance  
on a reliable, bandwidth-unlimited master processor.

In our recent work \cite{DKS2011,DKS2012} we began to address
this drawback of centralized systems by removing the assumption of 
an infallible and bandwidth-unlimited master processor.
We introduced a decentralized approach, where
a collection of worker processors cooperates on a large
set of independent tasks without the reliance on
a centralized control. Our prior algorithm is able to perform all tasks
with high probability (\whp{}), while dealing with undependable processors under 
an assumption that the average probability of live 
(non-crashed) processors returning incorrect results remains 
inferior to $\frac{1}{2}$ during the computation. 
There the adversary is only allowed to crash a constant fraction of 
processors, and the correct termination of the $n$-processor algorithm 
strongly depends on the availability of $\Omega(n)$ live processors.

The goal of this work is to develop a new $n$-processor algorithm that is able
to deal with much stronger adversaries, e.g., those that can crash
all but a fractional polynomial in $n$, or even a poly-log in $n$, 
number of processors, while still remaining in the synchronous
setting with reliable communication.
One of the challenges here is to enable an algorithm to terminate
efficiently in the presence of any allowable number of crashes.
Of course, to be interesting, such a solution
must be efficient in terms of its work and communication
complexities.

\medskip

\p{{\bf Contributions.}}
We consider the problem of performing $t$ tasks in a distributed 
system of $n$ workers \emph{without} centralized control. 
The tasks are independent, they admit
at-least-once execution semantics, and each task can be
performed by any worker in constant time.
We assume that tasks can be obtained from
some repository (else we can assume that the
tasks are initially known). 
The fully-connected message-passing system is synchronous and
reliable. 
We deal with failure models where crash-prone workers can return
incorrect results.
We present a randomized decentralized algorithm
and analyze it for two different adversaries of
increasing strength: 
constrained by a fractional polynomial, and poly-log constrained.
In each of these settings, we assume that at any point of the
computation 
live processors return bogus results with
the average probability inferior to $\frac{1}{2}$.
In more detail our contributions are as follows.

\begin{itemize}
\item[{\bf 1.}]
Given the initial set of processors $P$, with $|P|=n$,
we formulate two adversarial models,
where the adversary can crash a set $F$ of processors,
subject to the model constraints:
\begin{itemize}
\item[$a)$]
For the first adversary, constrained by a fractional polynomial,
we have $|P-F| = \Omega(n^{\varepsilon})$, for a constant $\varepsilon \in (0,1)$.
\item[$b)$]
For the second, poly-log constrained model, we have
$|P-F| = \Omega(\log^c n)$, for a constant $c \geq 1$. 
\end{itemize}
In both models the adversary may assign arbitrary constant probability
to processors, provided that processors in $P$ return bogus results with the
\emph{average} probability inferior to $\frac{1}{2}$. 
The adversary is additionally constrained, so that
the \emph{average} probability of returning bogus results
for processors in $P-F$ must remain inferior to $\frac{1}{2}$.

\item[{\bf 2.}]
We present a randomized algorithm for $n$ processors and $t$ tasks
that works in synchronous rounds, where each
processor performs a random task
and shares its cumulative knowledge of results with \emph{one} randomly 
chosen processor.
Each processor starts as a ``worker," and
once a processor accumulates a ``sufficient" number of results,
it becomes ``enlightened."
Enlightened processors then ``profess" their knowledge by multicasting
it to exponentially growing random subsets of processors.
When a processor receives a  ``sufficient" number of such
messages, it halts.
We note that workers become enlightened without any synchronization,
and using only the local knowledge. 
The values that control ``sufficient" numbers of results and messages
are established in our analysis and are 
used as \emph{compile-time} constants.

We consider the protocol, by which the ``enlightened" processors ``profess"
their knowledge and reach termination, to be of independent interest.
The protocol's message complexity does not depend on crashes, and the
processors can terminate without explicit coordination. This addresses
one of the challenges associated with termination when $P-F$ can vary
broadly in both models.

\item[{\bf 3.}]
We analyze the quality and performance of the algorithm for the two
adversarial models.
For each model we show that all live workers obtain the
results of all tasks \whp{}, and that these results are
correct \whp{}.
Complexity results for the algorithm also hold \whp{}:
\begin{enumerate}
\item[$a)$] For the polynomially constrained adversary
we show that the algorithm has work complexity   
$O(t\log{n}\log{\log{n}})$ and message complexity  
$O(n\log{n}\log{\log{n}})$. 
\item[$b)$] For the poly-log constrained 
adversary we show that the algorithm has work complexity
$O(t n^\varepsilon)$
and message complexity 
$O(n^{1+\varepsilon})$,
for any $0<\varepsilon <1$.
For this model we note that trivial solutions with all workers doing all tasks
may be work-efficient, but they do not guarantee
that the results are correct.
\end{enumerate}

\end{itemize}

\p{Prior work.}
Earlier approaches explored ways of improving the quality of 
the results obtained from untrusted workers in the settings where
a bandwidth-unlimited and infallible master is coordinating
the workers. Fernandez et al.~\cite{FGLS2005,FGLS2005b} and 
Konwar et al.~\cite{KRS2006} consider a distributed 
master-worker system 
where the workers may 
act maliciously by returning wrong results. 
Works~\cite{FGLS2005,FGLS2005b,KRS2006} design
 algorithms that help the master determine correct results \whp{},
while minimizing work.
The failure models assume that some fraction
of processors can exhibit faulty behavior.
Another recent work by Christoforou et al.~\cite{NCA2011} pursues
a game-theoretic approach.
Paquette and Pelc~\cite{PP2004} consider a model of a 
fault-prone system in which a decision has to be made on the basis of 
unreliable information and design a deterministic  strategy 
that leads to a correct decision \whp{}.

As already mentioned, 
our prior work \cite{DKS2011} introduced the decentralized approach that
eliminates the master, and provided
a synchronous algorithm that is able to perform all tasks
\emph{whp}.
That algorithm requires $\Omega(n)$ live processors to terminate correctly.
Our new algorithm uses a similar approach
to performing tasks, however, it takes a completely different approach
to termination that enables it to tolerate a much broader spectrum of crashes.
The approach uses the new notion of
``enlightened" processors that, having acquired sufficient knowledge,
``profess" this knowledge to other processors,
ultimately leading to termination.
The behavior of two algorithms is similar while $\Omega(n)$
processors remain in the computation, and we use the results
from our prior analysis for this case.

\label{ref:do-all}
A related problem, called Do-All, deals with the setting where a set
of processors must perform a collection of tasks 
in the presence of adversity~\cite{GS2011, KS1997}.
For Do-All the termination condition is that all tasks must
be performed and at least one processor is aware of the fact.
The problem in this paper is different in that each non-crashed
processor must learn the results of all tasks.
Additionally, the failure model in our problem allows processors
to return incorrect results, and our solution requires that
each task is performed a certain minimum number of times
so that the correct result can be discerned, whereas Do-All
algorithms only guarantee that each task is performed at least once.
Thus major changes are required to adapt a solution
for Do-All to our setting.
Do-All, being a key problem in the study of cooperative distributed
computation was considered in a variety of models, 
including message-passing~\cite{DHW1992,PMY1994}
and shared-memory models~\cite{KPRS1991,MS1994}.
Chlebus et al.~\cite{ChlebusGKS02}, study the Do-All problem, in the 
synchronous setting, considering \emph{work} (total number of steps taken) and \emph{communication}
 (total number of point-to-point messages) as equivalent, i.e. they consider the complexity
\emph{work} + \emph{communication} as the cost metric. They derive upper bounds for the bounded adversary,
 upper and lower bounds for $f-$bounded adversary, and almost matching upper and lower bounds for the linearly-bounded 
adversary.

Another related problem is the Omni-Do problem~\cite{DSS2006,GS2003,GRS2005}.
Here the problem is to perform all tasks in a network that
is prone to fragmentations (partitions), thus here the results also must be known
to all processors. However the failure models are quite different
(network fragmentation and merges), and so is the analysis for these models.
For linearly-bounded weakly-adaptive adversaries Chlebus and Kowalski~\cite{CK2004}
give a very efficient randomized algorithm for $t=n$ with work and communication complexities
$O(n \log^* n)$.

Probabilistic quantification of trustworthiness  of participants 
is also used in distributed reputation mechanisms.  
Yu et al.~\cite{YuS02a} propose a probabilistic model for distributed reputation management, 
 where \emph{agents}, i.e., participants, keep ratings of trustworthiness on one another, 
and update the values, using
\emph{referrals} and \emph{testimonies}, through interactions.

\p{Document structure.}   
In Section~\ref{model} we give the models of computation and adversity,
and measures of efficiency. Section~\ref{algorithm} presents our algorithm.
In Section~\ref{analysis} we carry out the analysis of the algorithm and derive  
complexity bounds. 
We conclude in Section~\ref{conclusion} with a discussion.


\section{Model of Computation and Definitions} \label{model}

\p{System model.}
There are $n$ processors, each with a unique 
identifier (id) from  set  $P = [n]$. We refer to the processor 
with id $i$ as processor $i$.
The system is synchronous and processors communicate by exchanging 
reliable messages.
Computation is structured in terms of synchronous \emph{rounds},
where in each round a processor performs three steps:
\emph{send}, \emph{receive}, and \emph{compute}.
In these steps, respectively, processors
can send and receive messages,
and perform local polynomial computation, where the local
computation time is assumed to be negligible compared to message latency. 
Messages received by a processor in a given step include all 
messages sent to it in the previous step. 
The duration of each round
depends on the algorithm.

\p{Tasks.}
There are $t$ tasks to be performed, each with a unique id from  
set ${\cal T} = [t]$. We refer to the task with id $j$ as $Task[j]$.
The tasks are $(a)$~similar, meaning that any task can be done in 
constant time by any processor, ($b$)~independent, meaning that each task 
can be performed independently of other tasks, and (c) idempotent, meaning 
that each task admits at-least-once execution semantics and can be performed
concurrently. 
For simplicity, we assume that the outcome of each task is a binary value.
The problem is most interesting when there are at least as many
tasks as there are processors, thus we consider $t \geq n$.

\p{Models of Adversity.}\label{ref:crash}
Processors are undependable in that a processor may
compute the results of tasks incorrectly and it may crash.
A processor can crash at any moment during the computation;
following a crash, a processor
performs no further actions.

Otherwise, each processor adheres to the protocol
established by the algorithm it executes.
We refer to non-crashed processors as \emph{live}.
We consider an oblivious adversary
that decides prior to the computation what processors to crash and
when to crash them. The maximum number of processors
that can crash is established by the adversarial models
(specified below).

For each processor $i\in P$, we define $p_i$ to be the probability of
processor $i$ returning incorrect results, independently of other
processors, such that, $ \frac{1}{n} \sum_i p_i < \frac{1}{2}-\zeta ,
\textrm{~for~some~} \zeta >0$.
That is, the average probability of processors in $P$
returning incorrect results is inferior to $\frac{1}{2}$.
We use the constant $\zeta$ to ensure that the average
probability of incorrect computation 
does not become arbitrarily close to $\frac{1}{2}$
as $n$ grows arbitrarily large. The individual probabilities
of incorrect computation are unknown to the processors.

For an execution of an algorithm, let
$F$ be the set of processors that adversary crashes. 
The adversary is constrained in that 
the \emph{average} probability of processors in $P-F$ computing
results incorrectly remains inferior to $\frac{1}{2}$.
We define two adversarial models:

\begin{description}
\item \textbf{{\em Model}} \M{fp}, {\it adversary constrained by a
  fractional polynomial} :\\\TT\TT
{
$|P-F| = \Omega(n^\varepsilon)$, for a constant $\varepsilon\in (0,1)$.
}

\smallskip

\item \textbf{{\em Model}} \M{pl}, {\it poly-log constrained adversary} :\\\TT\TT
{
$|P-F| = \Omega(\log^c n)$, for a constant  $c \ge 1$.  
}
\end{description}

\p{Measures of efficiency.}
We assess the efficiency of algorithms in terms of \emph{time}, \emph{work},
and \emph{message} complexities.
We use the conventional measures of  {time complexity}
and {work complexity}.
{Message complexity} assesses the number of point-to-point messages 
sent during the execution of an algorithm.
Lastly, we use the common definition of {\em an event} $\mathcal{E}$ 
{\em occurring with high probability} (\whp{}) to mean that
$\act{\bf Pr}[\mathcal{E}] = 1 - O(n^{-\alpha})$ for some constant $\alpha > 0$.

\section{Algorithm Description} \label{algorithm}

{We now present our decentralized solution, called algorithm \alg{}
(for Decentralized Algorithm with Knowledge Sharing), that employs
 no master and instead uses a gossip-based approach. We start
by specifying in detail the algorithm for $n$ processors and $t=n$ tasks, then
we generalize it for $t$ tasks, where $t\ge n$.} 

The algorithm is structured in terms of a main loop.
The principal data structures at each processor are two arrays of size linear in $n$:
one accumulates knowledge gathered from the processors, and another 
stores the results.
All processors start as \emph{workers}. In each iteration, any worker
performs one randomly selected task and
sends its knowledge
to just one other randomly selected processor.
When a worker obtains ``enough" knowledge about the tasks performed 
in the system, it computes the final results, stops being a worker, and becomes ``\ENed{}."
Such processors no longer perform tasks, and instead ``profess"
their knowledge to other processors by means of
multicasts to exponentially increasing 
random sets of processors.
The main loop terminates when a certain number of 
messages  is received from enlightened processors.
The pseudocode for  algorithm \alg{} is given in Figure \ref{fig:alg}.
We now give the details.

\begin{figure}[tp]
        \hrule \bigskip
\begin{algorithmic}[1] 
{\small
\STATENO{\bf Procedure} for  processor $i$;
\STATEIND\T {\bf external} $n,$ \co{$n$ is the number of processors and tasks}
\STATEIND\TT\TT\T ~$\const{H, K}$ \co{positive constants}\algolabel{alg:const}
\STATEIND\T $Task[1..n]$ \co{set of tasks}
\STATEIND\T $R_i[1..n]$ {\bf init} $\emptyset^n$ \co{set of collected results}
 \STATEIND\T $Results_i[1..n]$ {\bf init}  $\bot$ \co{array of results}
\STATEIND\T ${\it prof\!\_ctr}$  {\bf init} $0$ \co{number of {\sf profess} messages
  received}
\STATEIND\T $r$  {\bf init} $0$ \co{round number}
\STATEIND\T $\ell$  {\bf init} $0$ \co{number of {\sf profess}
  messages to be sent per iteration}
\STATEIND\T $worker$  {\bf init} {\sf true} \co{indicates whether the processor is still a worker}

\STATEIND\T {\bf while}  ${\it prof\!\_ctr} < \const{H} \log{n}$ {\bf do}  \algolabel{alg:term} 
\STATENO\T\T {\em Send:}
\STATEIND\TT\T {\bf if} $worker$ {\bf then}
\STATEIND\TT\T\T  Let $q$ be a randomly selected processor from $ {P} $ \algolabel{alg:share}
\STATEIND\TT\T\T Send $\langle {\sf share}, R_i[\; ] \rangle$ to
processor $q$ \algolabel{alg:share:2}
\STATEIND\TT\T {\bf else}
\STATEIND\TT\T\T  Let $D$ be a set of $2^{\ell} \log{n}$  randomly selected processors from
${P}$ \algolabel{alg:proc} 
\STATENO\TT\T\T \co{Here the selection is with replacement}
\STATEIND\TT\T\T Send $\langle {\sf profess}, R_i[\; ] \rangle$ to processors in $D$
\STATEIND\TT\T\T $\ell \gets \ell + 1$

\STATENO\T\T {\it Receive:}
\STATEIND\TT\T Let $M$ be the set of received messages
\STATEIND\TT\T ${\it prof\!\_ctr} \gets {\it prof\!\_ctr} + |\{ m : m\in
M \wedge m.type = {\sf profess} \}|$  
\STATEIND\TTT\T {\bf for} all $j \in \mathcal{T}$ {\bf do} 
\STATEIND\TTT\T\TT $R_i[j] \gets R_i[j] \cup (\bigcup_{m\in M} m.R[j])$ \co{update knowledge}

\STATENO\T\T {\it Compute:}
\STATEIND\TT\T $r \gets r+1$
\STATEIND\TT\T {\bf if}  $worker$ {\bf then}
\STATEIND  \algolabel{algo:choose_task}\TT\T\T  Randomly select $j \in
\mathcal{T}$ 
and compute the result ${v_j}$ for $Task[j]$
\STATEIND\TT\T\T $R_i[j] \leftarrow  R_i[j] \cup \{\langle v_j,i,r\rangle\}$
\STATEIND\TT\T\T {\bf if} $\min_{j \in \mathcal{T}}\{ |R_i[j]| \} \ge
\const{K} \log{n}$ {\bf then} \algolabel{algo:enled}
\co{$i$ has enough results}
\STATEIND\TT\T\T\T {\bf for  all} $j \in \mathcal{T}$ {\bf do}
\STATEIND\algolabel{algo:majority}
                 \TT\T\TT\T  $Results_i[j] \leftarrow u$ such that  triples $\langle u, \_, \_ \rangle$ form a plurality in $R_i[j]$
\STATEIND\TT\T\T\T $worker \gets {\sf false}$  \co{worker becomes \ENed{}}

\STATENO{\bf end}
\smallskip
\bigskip
\hrule
}
\end{algorithmic}
\caption{\rm Algorithm \alg{} for $t=n$; code at processor $i$ for $i \in P$.}
\label{fig:alg} 
\end{figure}

\p{Local knowledge and state.}
The algorithm is parameterized by $n$, the number of processors and tasks,
and by compile-time constants $\const{H}$ and $\const{K}$ that are discussed later
(they emerge from the analysis).
Every processor $i$ maintains the following:

\begin{itemize}
	\item Array of results $R_i[1..n]$, where 
         element $R_i[j]$, for $j\in \mathcal{T}$, is a 
         set of results for $Task[j]$.  Each $R_i[j]$ is a set of
         triples $\tasktuple{v}{i}{r}$, where
	 $v$ is the result computed for $Task[j]$ by processor $i$ in round $r$ (here the inclusion of  $r$
         ensures that the results computed by processor $i$ in different rounds are preserved).

	\item The array $Results_i[1..n]$ stores the final results. 

	\item The ${\it prof\_ctr}$ stores the number of  messages
          received from enlightened processors.

         \item $r$ is the round (iteration) number that is
            used by \emph{workers} to timestamp the computed results.

          \item $\ell$ is the exponent that controls the number of messages multicast
            by  enlightened processors.
\end{itemize}

\p{Control flow.}
The algorithm iterations are controlled by the main while-loop, and we use the term \emph{round}
to refer to a single iteration of the loop. The loop contains three
stages, viz., \emph{Send}, \emph{Receive}, and
\emph{Compute}. 

Processors communicate using messages $m$
that contain pairs $\langle type, R[\;] \rangle$.
Here $m.R[\;]$ is the sender's array of results.
When a processor is a worker, it sends messages
with $m.type = {\sf share}$. When a processor
becomes enlightened, it  sends messages
with $m.type = {\sf profess}$.
The loop is controlled by the counter ${\it prof\_ctr}$ that keeps
track of the received messages of type {\sf profess}.
We next describe the stages in  detail.

\begin{description}
\item \textbf{\emph{Send} stage:}
Any {worker} chooses a
target processor $q$ at random 
and sends its array of results $R[\;]$ to processor $q$
in a {\sf share} message. 
Any enlightened processor chooses a set $D \subseteq P$ of processors at
random and sends the array of results $R[\;]$ to processors in
$D$ in a {\sf profess} message.
The size of the set $D$ is $2^{\ell} \log{n}$, where initially $\ell
= 0$, and once a processor is enlightened, it increments $\ell$ by $1$
in every round.
(Strictly speaking, $D$ is a multiset, because the random selection
is with replacement. However this is done only
for the purpose of the analysis, and $D$ can be
safely treated
as a set for the purpose of sending  {\sf profess} messages.)
\label{ref:D}

\smallskip

\item \textbf{\emph{Receive} stage:}
Processor $i$ receives messages (if any)
sent to it in the preceding {\it Send} stage.
The processor increments its ${\it prof\_ctr}$ by the
number of {\sf profess} messages received.
For each task $j$, the processor updates its 
$R_i[j]$ by including the results received in all messages.

\smallskip

\item \textbf{\emph{Compute} stage:}
Any worker $i$ randomly selects task $j$, computes the result $v_j$, 
and adds the triple $\langle v_j, i, r \rangle$ for round $r$
to $R_i[j]$. 
For each task the worker checks whether ``enough'' results were collected.
Once at least $\const{K} \log n$ results for each task are obtained, the
worker stores the final results in $Results_i[\;]$ 
by taking the plurality of results for each task, and becomes enlightened.
(In Section~\ref{analysis} we reason about the compile-time
constant $\const{K}$, and establish that $\const{K} \log n$ results are sufficient for our claims.)
Enlightened processors rest on their laurels in subsequent 
\emph{Compute} stages.
\end{description}

\p{{\bf Reaching Termination.}} We note that a processor must become
\ENed{} before it can terminate. Processors can become enlightened at different times
and without any synchronization.
Once enlightened, they profess their knowledge by multicasting
it to exponentially growing random subsets $D$ of processors.
When a processor receives sufficiently many such messages, i.e.,~$\const{H} \log n$, 
it halts, again without any synchronization,
and using only the local knowledge. 
We consider this protocol to be of independent interest.
In Section~\ref{analysis} we reason about the compile-time
constant $\const{H}$, and establish that~$\const{H} \log n$ {\sf profess}
messages are sufficient for our claims; additionally we show that
the protocol's efficiency can be assessed independently of the number of crashes.

\p{Extending the Algorithm for $t\ge n$.}\label{ref:tn}
We now show how to modify the algorithm to handle arbitrary number of tasks $t$
such that $t\ge n$.
Let ${\cal T}' = [t]$ be the set of unique task identifiers, where $t\ge n$. 
We segment the $t$ tasks into chunks of $\lceil t/n \rceil$ tasks, and construct
a new array of chunk-tasks with identifiers in ${\cal T} = [n]$, where
each chunk-task takes $\Theta(t/n)$ time to perform by any live processor.
We now use algorithm \alg{}, where the only difference is that
each \emph{Compute} stage takes  $\Theta(t/n)$ time to perform a chunk-task. 
In the sequel, we use \alg{} as the name of the algorithm when $t=n$, and
we use \alg{_{t,n}} as the name of the algorithm when $t\ge n$.


\section{Algorithm Analysis} \label{analysis}

We present the performance analysis of algorithm \alg{} in the two
adversarial failure models.
We first present the analysis that deals with the case when $t=n$, then
extend the results to the general case with $t\ge n$ for algorithm \alg{_{t,n}}.

\subsection{Foundational Lemmas}\label{sec:lemmas}

We proceed by giving  lemmas relevant to both adversarial models,
starting with the  statement of the well known Chernoff bound.

 \begin{lemma}[{\rm Chernoff Bounds}]\label{chernoff}
 Let $X_1,X_2, ..., X_n$ be $n$ independent
 Bernoulli random variables with $\act{\bf Pr}[X_i=1] = p_i$ and
 $\act{\bf Pr}[X_i=0] = 1-p_i$,
 then it holds for $X=\sum_{i=1}^{n}X_i$
 and $\mu = \mathbb{E}[X] = \sum_{i=1}^{n}p_i$ that
 for all $\delta >0$,
 (i)~$\act{\bf Pr}[X \geq (1+\delta)\mu] \leq e^{-\frac{\mu\delta^2}{3}}$,
 and
 (ii)~$\act{\bf Pr}[X \leq (1-\delta)\mu] \leq e^{-\frac{\mu\delta^2}{2}}$.
 \end{lemma}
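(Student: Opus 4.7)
The plan is to present the classical moment-generating-function argument, since this lemma is a standard textbook result included for reference. The overall strategy is: apply Markov's inequality to $e^{tX}$ for a suitable parameter $t$, exploit independence to factor the expectation, use $1+x\leq e^x$ to bound each factor, and then optimize $t$.

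For part (i), I would write, for any $t>0$,
$$\act{\bf Pr}[X\geq(1+\delta)\mu] = \act{\bf Pr}[e^{tX}\geq e^{t(1+\delta)\mu}] \leq e^{-t(1+\delta)\mu}\,\mathbb{E}[e^{tX}].$$
By independence of $X_1,\ldots,X_n$, the expectation factors as $\mathbb{E}[e^{tX}]=\prod_i\mathbb{E}[e^{tX_i}]=\prod_i\bigl(1+p_i(e^t-1)\bigr)$. Applying $1+x\leq e^x$ to each factor yields $\mathbb{E}[e^{tX}]\leq \exp\bigl(\mu(e^t-1)\bigr)$. Choosing $t=\ln(1+\delta)$ (the minimizer) gives the familiar intermediate form
$$\act{\bf Pr}[X\geq(1+\delta)\mu]\leq \left(\frac{e^\delta}{(1+\delta)^{1+\delta}}\right)^{\!\mu}.$$
The remaining step is the analytic inequality $e^\delta/(1+\delta)^{1+\delta}\leq e^{-\delta^2/3}$ valid on the relevant range of $\delta$, which delivers the stated bound.

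For part (ii), the argument is symmetric: I would apply Markov's inequality to $e^{-tX}$ for $t>0$, factor using independence, invoke $1+x\leq e^x$, and optimize via $t=-\ln(1-\delta)$ to obtain
$$\act{\bf Pr}[X\leq(1-\delta)\mu]\leq \left(\frac{e^{-\delta}}{(1-\delta)^{1-\delta}}\right)^{\!\mu}.$$
Bounding $e^{-\delta}/(1-\delta)^{1-\delta}\leq e^{-\delta^2/2}$ for $\delta\in(0,1)$ completes the proof.

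The only step that genuinely requires calculation is the final analytic comparison of the ratio-of-exponentials expression to the Gaussian-like tail $e^{-\delta^2/c}$; both inequalities are verified by comparing Taylor expansions of the logarithms around $\delta=0$ and checking monotonicity. Everything else is mechanical: Markov, independence, the elementary bound $1+x\leq e^x$, and a one-parameter optimization. The main obstacle, if any, is bookkeeping in the last comparison — there is no conceptual difficulty.
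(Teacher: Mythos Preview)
Your proof is the standard moment-generating-function argument and is correct. The paper, however, does not prove this lemma at all: it is stated as ``the well known Chernoff bound'' and used as a black box, so there is no paper-side argument to compare against. Your sketch is exactly what one would supply if asked to fill in the omitted proof; the only minor caveat is that the upper-tail inequality $e^\delta/(1+\delta)^{1+\delta}\le e^{-\delta^2/3}$ in the form stated typically needs $\delta\in(0,1]$ rather than all $\delta>0$, which you already flag by writing ``valid on the relevant range of $\delta$.''
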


Now we show that if $\Theta(n \log{n})$ {\sf profess} messages are sent 
by the \ENed{}
processors, then 
the algorithm
terminates 
\whp{} in one round.

\begin{lemma} \label{lem:termin}
Let $r$ be the first round by which the total number of {\sf profess} messages
is $\Theta(n \log{n})$. Then by the end of this round
every live processor halts \whp{}.
\end{lemma}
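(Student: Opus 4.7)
The plan is to reduce the lemma to a standard balls-and-bins concentration argument. The central observation is that each \textsf{profess} message is directed to a target chosen uniformly at random from $P$, and that these random choices are mutually independent: within a single enlightened processor's round, the set $D$ of $2^\ell\log n$ targets is drawn \emph{with replacement} from $P$ (as noted on line~\ref{alg:proc} of the algorithm), and across distinct senders and distinct rounds the choices are made independently. Thus if $M$ denotes the total number of \textsf{profess} messages sent up through round~$r$, by hypothesis $M = c_0\, n\log n$ for some constant $c_0$, and the $M$ targets can be regarded as i.i.d.\ uniform draws from $P$.

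Next I would fix a single live processor $i$ and let $Y_i$ be the number of \textsf{profess} messages it has received by the end of round~$r$. Since each of the $M$ messages lands at $i$ independently with probability $1/n$, we have $Y_i = \sum_{k=1}^{M} X_k$ with $X_k$ i.i.d.\ Bernoulli$(1/n)$, so $\mu := \mathbb{E}[Y_i] = c_0 \log n$. Assuming $c_0$ is chosen larger than $\const{H}$ (this is the free parameter hidden in $\Theta(n\log n)$), set $\delta = 1 - \const{H}/c_0 \in (0,1)$. Then the Chernoff lower-tail bound in Lemma~\ref{chernoff}(ii) gives
\[
\act{\bf Pr}\bigl[\, Y_i < \const{H}\log n \,\bigr] \;\le\; \act{\bf Pr}\bigl[\, Y_i \le (1-\delta)\mu \,\bigr] \;\le\; e^{-\mu\delta^2/2} \;=\; n^{-c_0\delta^2/2}.
\]
Picking $c_0$ large enough makes the exponent $c_0\delta^2/2$ exceed any prescribed constant $\alpha+1$, so a union bound over the at most $n$ live processors yields $\act{\bf Pr}[\,\exists\, i \text{ live}: Y_i < \const{H}\log n\,] = O(n^{-\alpha})$. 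Consequently, whp every live processor's counter ${\it prof\_ctr}$ has been incremented to at least $\const{H}\log n$ by the end of the \emph{Receive} stage of round~$r$; the guard on line~\ref{alg:term} then fails when the loop re-evaluates, and each live processor halts by the end of round~$r$.

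The main obstacle is the quantitative coupling between the constant $c_0$ concealed in the $\Theta(n\log n)$ hypothesis and the compile-time constant $\const{H}$ governing the termination threshold: the Chernoff bound only yields polynomial concentration once $c_0$ is chosen sufficiently large relative to $\const{H}$, and the union bound over all live processors then costs an extra factor of $n$ in the failure probability. These constants need to be threaded consistently with the rest of the analysis so that the value $\const{H}$ fixed here is compatible with the claims elsewhere that guarantee enough enlightened processors exist to actually send $\Theta(n\log n)$ messages in the first place.
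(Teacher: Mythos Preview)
Your proposal is correct and follows essentially the same approach as the paper: model the \textsf{profess} messages as i.i.d.\ uniform draws from $P$ (leveraging the with-replacement selection in line~\algoref{alg:proc}), apply the Chernoff lower-tail bound to a single processor's receipt count, then union-bound over all processors. The only framing difference is that the paper uses this lemma to \emph{define} $\const{H}$ (setting $\const{H}=(1-\delta)k$ for the constant $k$ hidden in $\Theta(n\log n)$), whereas you take $\const{H}$ as given and choose $c_0$ large enough---but you already flag this coupling in your final paragraph, so the arguments are equivalent.
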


\begin{proof}
Let $\tilde{n} =  kn \log{n}$ be the number of
${\sf profess}$ messages sent by round $r$, where $k > 1$ is a sufficiently large
constant. We show that \whp{} every live processor 
receives at least $(1-\delta)k\log{n}$ ${\sf profess}$ messages, 
for some constant $\delta\in (0,1)$.
Let us assume that there exists processor $q$ that receives less than
$(1-\delta)k\log{n}$ of such messages. 
We prove that \whp{} such a processor does not exist.

Since $\tilde{n}$ ${\sf profess}$ messages are sent by round $r$,
there were $\tilde{n}$ random selections of processors from set $P$
in line~\algoref{alg:proc} of  algorithm \alg{} on page~\pageref{fig:alg},
possibly by different \ENed{} processors. We denote by $i$ an
index of one of the random selections in line~\algoref{alg:proc}. 
Let $X_i$ be a Bernoulli random variable such that $X_i = 1$ if
processor $q$ was chosen by an \ENed{} processor and
$X_i=0$ otherwise. 

We define a random variable $X= \sum_{i=1}^{\tilde{n}}X_i$\label{ref:X}
to estimate the total number of times processor~$q$ is
selected by round $r$. 
In line~\algoref{alg:proc}
every \ENed{} processor chooses {a set of destinations}
for the ${\sf profess}$ message
uniformly at random, and hence 
${\bf Pr}[X_i=1] = \frac{1}{n}$. 
Let 
$\mu = \mathbb{E}[X] = \sum_{i=1}^{\tilde{n}}X_i = \frac{1}{n}k
\, n \log{n} = k \log{n}$, then by applying Chernoff bound, for the same
$\delta$ chosen as above, we have:
\begin{displaymath}
{\bf Pr}[ X \leq (1-\delta)\mu ] \leq e^{ -\frac{\mu\delta^2}{2}} 
\leq e^{-\frac{(k\log{n})\delta^2}{2}} \leq \frac{1}{n^{\frac{b\delta^2}{2}}} \leq \frac{1}{n^{\alpha}}
\end{displaymath}

\noindent
where $\alpha >1$ for some sufficiently large $b$. 
We now define $\const{H}$ to be $\const{H} =   (1-\delta)k$.
Thus, with this $\const{H}$, we have ${\bf Pr}[ X \leq \const{H}\log{n}] \leq
\frac{1}{n^{\alpha}}$ for some $\alpha > 1$.
Now let us denote by $\mathcal{E}_q$ the fact that 
 ${\it prof\_ctr}_q \geq \const{H}\log{n}$  by the end of round $r$, and 
let $\bar{\mathcal{E}_q}$ be the complement of that event.  
By Boole's inequality we have 
${\bf Pr}[ \cup_q \bar{\mathcal{E}}_q ] \leq  \sum_q{\bf
  Pr}[\bar{\mathcal{E}_q} ] \leq \frac{1}{n^{\beta}}$, 
where $\beta =\alpha - 1 > 0$. Hence each processor $q \in P$
is the destination of at least $\const{H}\log{n}$ ${\sf profess}$ messages 
\whp{}, i.e., 
$$
{\bf Pr}[ \cap_q \mathcal{E}_q] = 
{\bf Pr}[ \overline{\cup_q \bar{\mathcal{E}_q}} ] 
 = 1- {\bf Pr}[ \cup_q \bar{\mathcal{E}_q} ] 
\geq  1 -\frac{1}{n^{\beta}} 
$$
\noindent
and hence, it halts (line \algoref{alg:term}).
\end{proof}

We use the constant $\const{H}$ from the proof of Lemma~\ref{lem:termin}
as a compile-time constant in algorithm~\alg{} (Figure \ref{fig:alg}).
The constant is used in the main while loop (line~\algoref{alg:term})
to determine when a sufficient number of {\sf profess} messages
is received from enlightened processors, causing the loop to terminate.

We now show that once a processor, that the adversary does not crash,
becomes enlightened then,  \whp{},  in $O(\log{n})$ rounds every 
other live processor becomes enlightened, and 
halts. 

\begin{lemma} \label{lem:halt}
Once a processor $q \in P-F$ becomes \ENed{}, every live
processor halts in additional $O(\log{n})$ rounds \whp{}.
\end{lemma}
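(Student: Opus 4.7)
The plan is to argue that $q$'s own sequence of {\sf profess} multicasts alone generates the $\Theta(n\log n)$ random selections required by Lemma~\ref{lem:termin} within $O(\log n)$ additional rounds after $q$ becomes enlightened, so that termination follows directly from that earlier result. In particular, no reasoning about a cascading spread of enlightenment is needed---which is fortunate, since $|P-F|$ can be as small as $n^\varepsilon$ or $\log^c n$ in the two adversarial models, making a population-based argument fragile.

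Let $r_0$ be the round in which $q$ becomes enlightened. Because $q \in P-F$, $q$ is alive in every round $r_0+k$ for $k \ge 1$, and therefore executes the enlightened branch of the \emph{Send} stage in each of them. Since $q$'s local exponent $\ell$ equals $0$ immediately after enlightenment and increments by one per round, in round $r_0+k$ the processor $q$ performs $2^{k-1}\log n$ independent uniformly random selections from $P$ in line~\algoref{alg:proc}, each triggering a {\sf profess} message. Summing the geometric series, the total number of random selections generated by $q$ alone through the end of round $r_0+k$ is $(2^k-1)\log n$.

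Taking $k = \lceil \log n \rceil + O(1)$ makes this total meet the $\Theta(n\log n)$ threshold in the hypothesis of Lemma~\ref{lem:termin}. Any {\sf profess} messages sent by other enlightened processors can only further increase each recipient's ${\it prof\_ctr}$, so using only $q$'s contributions yields a valid lower bound. The Chernoff-plus-union-bound proof of Lemma~\ref{lem:termin} operates on a collection of independent uniform draws from $P$ and is agnostic to the identities of the senders, so it applies verbatim with the draws originating from $q$. The conclusion is that by the end of round $r_0+k$, every live processor has ${\it prof\_ctr} \ge \const{H}\log n$ \whp{} and therefore exits the main while-loop on line~\algoref{alg:term}. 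Since $k = O(\log n)$, this establishes the lemma.

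The main obstacle is a modest bookkeeping one: confirming that the geometric fanout growth of $q$'s multicasts matches the hypothesis of Lemma~\ref{lem:termin} and that other enlightened processors' contributions can only help rather than interfere. Both points are handled by the monotonicity of ${\it prof\_ctr}$ and by the fact that the random selections made on line~\algoref{alg:proc} are precisely the kind of independent uniform draws analyzed in Lemma~\ref{lem:termin}.
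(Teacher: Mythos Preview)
Your proposal is correct and follows essentially the same approach as the paper: the paper's proof also argues that processor $q$ alone, being non-crashing, sends $n\log n$ {\sf profess} messages within $\log n$ rounds of becoming enlightened, and then invokes Lemma~\ref{lem:termin}. Your version is simply more explicit about the geometric-series bookkeeping and about why contributions from other enlightened processors can only help, but the core argument is identical.
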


\begin{proof}
According to Lemma~\ref{lem:termin} if $\Theta(n \log{n})$ ${\sf
  profess}$ messages 
are sent then every processor halts \whp{}. Given that processor $q$
does not crash 
it takes $q$ at most $\log{n}$ rounds to send $n \log{n}$ ${\sf
  profess}$ messages 
(per line \algoref{alg:proc} in Figure~\ref{fig:alg}), regardless
of the actions of other processors.  
Hence, \whp{} every live processor halts
in $O(\log{n})$ rounds. 
\end{proof}

Next we establish the work and message complexities for
algorithm \alg{} for the case when number of crashes is small, specifically when
at least a linear number of processors do not crash.
As we mentioned in the introduction, while $\Omega(n)$ processors
remain active in the computation, algorithm \alg{} performs tasks
in exactly the same pattern as algorithm $A$ in \cite{DKS2011}
(to avoid a complete restatement, we kindly refer the reader to that earlier paper).
This forms the basis for the next lemma.

\begin{lemma}\label{lem:epocha}
Algorithm \alg{} has work and message complexity 
{$\Theta(n \log n)$}
when $\Omega(n)$ processors do not crash.
\end{lemma}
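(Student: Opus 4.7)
The plan is to leverage the referenced prior analysis of algorithm $A$ from \cite{DKS2011} for the task-performing phase, and then add a direct accounting of the termination phase using the lemmas already proved in this section.

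First, I would argue that the computation proceeds in two conceptual phases: a \emph{worker phase}, during which processors are still gathering results and sending \textsf{share} messages, and a \emph{profess phase}, beginning when the first processor $q \in P - F$ becomes enlightened and ending when all live processors halt. Since $\Omega(n)$ processors do not crash, the task-selection and knowledge-sharing behavior during the worker phase is statistically identical to that of the prior algorithm $A$ in \cite{DKS2011}, which establishes via a coupon-collector-style argument that within $O(\log n)$ rounds every worker accumulates at least $\const{K} \log n$ results for each task \whp{} and therefore becomes enlightened. In particular, some processor $q \in P - F$ becomes enlightened within $O(\log n)$ rounds \whp{}. Then Lemma~\ref{lem:halt} guarantees that all live processors halt within $O(\log n)$ additional rounds, so the total number of rounds is $O(\log n)$ \whp{}.

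With an $O(\log n)$ round bound in hand, the complexity counting is straightforward. Each round contributes at most $n$ units of task work (one task per worker) and at most $n$ \textsf{share} messages (one per worker), yielding $O(n \log n)$ work and $O(n \log n)$ \textsf{share} messages overall. For \textsf{profess} messages, I would use Lemma~\ref{lem:termin}: termination is triggered precisely when $\Theta(n \log n)$ such messages have been delivered, and since each \textsf{profess} multicast is of size $2^\ell \log n$ with $\ell$ incrementing by one per round at each enlightened processor, the geometric sum telescopes and the last round cannot overshoot the previous cumulative total by more than a constant factor. Hence the total number of \textsf{profess} messages is also $\Theta(n \log n)$, matching the lower bound implicit in Lemma~\ref{lem:termin}. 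Summing the two contributions gives the $\Theta(n \log n)$ message bound.

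The main obstacle I anticipate is the upper bound on \textsf{profess} traffic: one must argue carefully that the exponential multicast schedule does not overshoot $\Theta(n \log n)$ by more than a constant factor, since enlightened processors act without global synchronization and multiple of them may be simultaneously doubling their multicast sets. I would resolve this by noting that the halt condition at line~\algoref{alg:term} is checked at every round, so once the total delivered \textsf{profess} messages crosses the $\Theta(n \log n)$ threshold guaranteed by Lemma~\ref{lem:termin}, every live processor halts in that same round and sends no further \textsf{profess} messages; thus the last round adds at most a constant multiple of the cumulative messages sent in all prior rounds. A matching $\Omega(n \log n)$ lower bound on work and messages follows because each of $n$ tasks must be performed at least $\const{K} \log n$ times to enable any processor to become enlightened, and each live processor must receive $\const{H} \log n$ \textsf{profess} messages to halt.
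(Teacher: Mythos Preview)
Your proposal is correct and follows essentially the same approach as the paper: both invoke the analysis of algorithm~$A$ from \cite{DKS2011} for the worker/task-performing phase and then use Lemmas~\ref{lem:termin} and~\ref{lem:halt} to bound the additional cost of the termination phase by $\Theta(n\log n)$. Your write-up is actually more explicit than the paper's on two points the paper glosses over---the overshoot argument for {\sf profess} traffic and the matching $\Omega(n\log n)$ lower bound---so the extra detail you plan to include is helpful rather than divergent.
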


\begin{proof}
Algorithm \alg{} chooses tasks to perform in the same
 pattern as algorithm  $A$ in \cite{DKS2011}, however
the two algorithms have very different termination strategies.
Theorems~2 and~4 of \cite{DKS2011} establish that
in the presence of at most $f \cdot n$ crashes, for a constant $f\in(0,1)$, 
the work and message
complexities of algorithm $A$ are $\Theta(n \log{n})$.
The termination strategy of algorithm \alg{} is completely different,
however, per Lemmas~\ref{lem:termin} and~\ref{lem:halt}, 
after at least one processor from $P-F$ is \ENed{} every live
  processor halts in $\Theta(\log n)$ rounds  \whp{}, having sent
  $\Theta(n \log n)$ ${\sf profess}$ messages.
Thus, with at least a linear number of processors
remaining, the work and message complexities,
relative to algorithm~$A$, increase by an additive
$\Theta(n \log n)$ term.
The result follows.
\end{proof}

We denote by $L$ the number of rounds required for a processor from the
set $P-F$ to become \ENed{}. We next 
analyze the value of $L$ for models \M{fp} and \M{pl}.


\subsection{Analysis for  Model \M{fp}} \label{model:Fa}

In model \M{fp} we have $|F| \le n - n^\varepsilon$.
Let $F_r$ be the actual number of crashes that occur
prior to round $r$. 
For the purpose of analysis we divide an execution
of the algorithm into two epochs:
\epoch{a} consists of all rounds $r$ where $|F_r|$ is at most linear in $n$, 
so that when the number of live processors is at least $c'n$
for some suitable constant $c'$;
\epoch{b} consists of all rounds $r$ starting with first round $r'$ 
(it can be round 1) when the number of live
processors drops below some $c'n$ and becomes
$c''n^\varepsilon$ for some suitable constant $c''$.
Note that either epoch may be empty.

For the small number of crashes in \epoch{a}, 
Lemma~\ref{lem:epocha} gives the worst case
work and message complexities as $\Theta(n \log n)$;
the upper bounds apply whether or not the algorithm
terminates in this epoch.

Next we consider \epoch{b}.
If the algorithm terminates in round $r'$, the first round of the epoch,
the cost remains the same as
given by Lemma~\ref{lem:epocha}.
If it does not terminate, it incurs additional
costs associated with the processors in $P - F_{r'}$,
where $|P - F_{r'}| \le c'' n^\varepsilon$. 
We analyze the costs for \epoch{b} in the rest
of this section. The final message and work complexities
will be at most the worst case complexity for \epoch{a}
plus the additional costs for \epoch{b}
incurred while $|P-F|= \Omega(n^\varepsilon)$ per model \M{fp}.

First we show that \whp{} it will take $L= O(n^{1-\varepsilon} \log{n}
\log{\log{n}})$ rounds for a worker from the set $P-F$ to become \ENed{}
in \epoch{b}.

\begin{lemma} \label{thresholdFa}
In $O(n^{1-\varepsilon}\log{n})$ rounds of \epoch{b} 
every task is performed
$\Theta(\log{n})$ times \whp{} by processors in $P-F$.
\end{lemma}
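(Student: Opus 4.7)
The plan is to apply Chernoff bounds and a union bound to the per-task execution counts over a window of $R = c_1 n^{1-\varepsilon}\log n$ rounds, for a sufficiently large constant $c_1$. By the definition of \epoch{b}, $|P-F_r| \geq c''n^\varepsilon$ at every round in the window, and by line~\algoref{algo:choose_task} every such processor that is still a worker chooses a task from $\mathcal{T}=[n]$ uniformly and independently in each round.

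First I would set up the counting. For each task $j\in\mathcal{T}$, let $X_j$ be the number of times $j$ is performed by a worker in $P-F$ during the window. Granting (for now) that all $P-F$ processors remain workers throughout, the number of independent uniform trials feeding the $X_j$'s is at least $c''n^\varepsilon\cdot R = c_1 c'' n\log n$ and $\mathbb{E}[X_j] \geq c_1 c''\log n$. Applying the two-sided Chernoff bound of Lemma~\ref{chernoff} with a fixed $\delta\in(0,1)$ yields $\Pr\bigl[X_j \notin [(1-\delta),(1+\delta)]\cdot\mathbb{E}[X_j]\bigr] \leq n^{-\alpha}$ for any prescribed $\alpha>1$, once $c_1$ is chosen large enough. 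A union bound over the $n$ tasks then delivers $X_j=\Theta(\log n)$ for every $j$ with probability $1-O(n^{-(\alpha-1)})$, i.e.\ \whp{}, which is the statement of the lemma.

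The main obstacle will be justifying the conditioning: a $P-F$ worker that becomes enlightened mid-window stops producing fresh trials, shrinking the expected counts. I would address this by running the Chernoff analysis only up to the stopping round $r^{\ast} \leq R$ defined as the first round after which some $P-F$ processor is enlightened (with $r^{\ast}=R$ if no such round exists). Up to $r^{\ast}$ every $P-F$ processor is still a worker, so the full $|P-F_r| \geq c''n^\varepsilon$ workers contribute uniform trials in each round, and the Chernoff argument applies verbatim to this prefix. If $r^{\ast}=R$ the stated bound on each $X_j$ follows directly; if $r^{\ast} < R$, the first $P-F$ processor to be enlightened already has $\geq \const{K}\log n$ results per task recorded in its $R_i[\,\cdot\,]$, each corresponding to a prior task execution, so the aggregate performance count per task is $\Omega(\log n)$. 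In either case every task is performed $\Theta(\log n)$ times within the $R = O(n^{1-\varepsilon}\log n)$ rounds of \epoch{b}, as claimed.
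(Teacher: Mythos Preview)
Your approach is essentially the paper's: both count the $\geq c''n^{\varepsilon}\cdot R = \Theta(n\log n)$ uniform task selections made by $P-F$ workers over the window, apply a Chernoff bound per task, and union-bound over the $n$ tasks. The paper disposes of the worker-versus-enlightened issue more tersely (``Suppose the algorithm does not terminate'') rather than via your explicit stopping time $r^{\ast}$, and proves only the lower tail, but the substance is the same.

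One small point on your $r^{\ast}<R$ branch: the $\const{K}\log n$ triples recorded in the enlightened processor's $R_i[\,\cdot\,]$ need not all have been produced by processors in $P-F$; some may come from processors in $F$ that had not yet crashed. So that branch does not literally establish the lemma's ``by processors in $P-F$'' clause. The paper glosses over the same point with its ``as reasoned earlier'' appeal. In either treatment the branch is harmless for the overall argument, since once a processor in $P-F$ is enlightened, Lemma~\ref{lem:halt} delivers termination in $O(\log n)$ further rounds and the lemma's downstream role (feeding Lemma~\ref{lem:dissFa-p2}) is already fulfilled.
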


\begin{proof}
{If the algorithm terminates within 
$O(n^{1-a}\log{n})$
rounds of \epoch{b}, then each task is performed
$\Theta(\log n)$ times as reasoned earlier.
Suppose the algorithm does not terminate
(in this case its performance is going to be worse).
}

Let us assume that after $\tilde{r}  = \kappa n^{1-a}\log{n}$  
rounds of algorithm \alg{}, where
$\kappa$ is a sufficiently large constant and $0<a <1$ is a constant,
there exists a task $\tau$ that is performed less than $(1-\delta) \kappa \log{n}$ times 
among all live workers, for some $\delta > 0$. 
We prove that \whp{} such a task does not exist.

We define $k_2$ to be such that $k_2 = (1-\delta) \kappa$
(the constant $k_2$ will play a role in establishing the value of the compile-time
  constant $\const{K}$ of algorithm \alg{}; we come
  back to this in Section~\ref{analysis}). According to the above assumption,
 at the end of round $\tilde{r}$ for some task $\tau$, we have 
$|\cup_{j=1}^{n} R_j[\tau]| < (1-\delta) \kappa \log{n} = k_2 \log{n}$.

Let us consider all algorithm iterations individually performed by each processor in $P-F$
during the $\tilde{r}$ rounds.
Let $\xi$ be the total number of such individual iterations. 
Then $\xi \ge \tilde{r} |P-F| \geq cn^a$. 
During any such iteration, a processor from $P-F$ selects and performs 
task $\tau$ in  line~\algoref{algo:choose_task} independently
with probability $\frac{1}{n}$.
Let us arbitrarily enumerate said iterations from $1$ to~$\xi$.
Let $X_1, \ldots, X_x, \dots, X_\xi$ be Bernoulli random variables, such that
$X_x$ is $1$ if task  $\tau$ is performed in iteration $x$, and $0$ otherwise.
We define $X \equiv  \sum_{x=1}^{\xi}X_x$, the random variable that describes
the total number of times task $\tau$ is performed during the 
$\tilde{r}$ rounds by processors in $P-F$. 
We define $\mu$ to be ${\mathbb E}[X]$.
Since ${\bf Pr}[X_x=1] = \frac{1}{n}$, for $x\in \{1, \ldots, \xi\}$,
where $\xi \geq \tilde{r}cn^a$,
by linearity of expectation, we obtain
 $\mu = {\mathbb E}[X] = \sum_{x=1}^{\tilde{r} cn^{a}} \frac{1}{n} =
\frac{\tilde{r} cn^{a}}{n} > \kappa c \log{n} $.
Now by applying  Chernoff bound, for the same $\delta>0$ chosen as above, we have: 
\begin{displaymath}
{\bf Pr}[ X \leq (1-\delta)\mu ] \leq e^{ -\frac{\mu\delta^2}{2}} 
\leq e^{-\frac{(\kappa c\log{n})\delta^2}{2}} \leq \frac{1}{n^{\frac{b\delta^2}{2}}} \leq \frac{1}{n^{\alpha}}
\end{displaymath}

\noindent
where $\alpha >1$ for some sufficiently large $b$.
Now let us denote by $\mathcal{E}_\tau$ the fact that 
 $|\cup_{i=1}^{n} R_i(\tau)| > k_2 \log{n}$ by the round $\tilde{r} $ of the algorithm and  
we denote by $\bar{\mathcal{E}_\tau}$ the complement of that event.  
Next by Boole's inequality we have 
${\bf Pr}[ \cup_\tau \bar{\mathcal{E}}_\tau ] \leq  \sum_\tau{\bf Pr}[\bar{\mathcal{E}_\tau} ] \leq \frac{1}{n^{\beta}}$,
where $\beta =\alpha - 1 > 0$. Hence each task  is performed at least
$\Theta(\log{n})$ times  by workers in $P-F$ \whp{}, i.e., 

\hfill
$\displaystyle
{\bf Pr}[ \cap_\tau \mathcal{E}_\tau ] = 
{\bf Pr}[ \overline{\cup_\tau\bar{\mathcal{E}_\tau}} ] \geq  1 -
\frac{1}{n^{\beta}}.
$
\end{proof}

We now focus only on the set of live processors $P-F$
with  $|P-F| \geq cn^{\varepsilon}$. 
Our goal is to show that in $O(n^{1-\varepsilon} \log{n}
\log{\log{n}})$ rounds of algorithm \alg{} at least one processor from
$P-F$ becomes \ENed{}.
{In reasoning about Lemmas~\ref{lem:dis:log},~\ref{lem:dissFa-p1}
  and~\ref{lem:dissFa-p2}, that follow, we note that
if the algorithm terminates within 
$O(n^{1-\varepsilon} \log{n} \log{\log{n}})$
rounds of \epoch{b}, then every processor in $P-F$ is enlightened as
reasoned earlier. Suppose the algorithm does not terminate
(in focusing on this case we note that the algorithm's performance is going to be worse).}

We first show that any triple $z$ generated by a processor in $P-F$ is known to all
processors in $P-F$  in $O(n^{1-\varepsilon} \log{n} \log{\log{n}})$ rounds of
algorithm~\alg{}.

We denote by $S(r) \subseteq P-F$ the set of processors that know a certain triple
$z$ by round $r$, and let $s(r) = |S(r)|$. 
Next lemma shows 
that 
by round $r_1 = r' + \Theta(n^{1-\varepsilon} \log{n} \log{\log{n}})$
in \epoch{b} we have
$s(r_1) = \Theta(\log^3{n})$.

\begin{lemma} \label{lem:dis:log}
By round  $r_1 = r' + \Theta(n^{1-\varepsilon} \log{n} \log{\log{n}})$
of \epoch{b},
$s(r_1) = \Theta(\log^3{n})$ \whp{}.
\end{lemma}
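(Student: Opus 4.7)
The plan is to model the spread of the triple $z$ through $P - F$ as a gossip (epidemic) process driven by the \textsf{share} messages produced by informed workers, and then obtain the required growth of $s(r)$ via a phased Chernoff argument analogous to the one used in Lemma~\ref{lem:termin} and Lemma~\ref{thresholdFa}. The key quantitative observation is that whenever $|S(r)|=s$ and $s \ll |P-F|$, each of the $s$ informed workers in $P-F$ picks a uniformly random target in $P$, so the probability that a fixed uninformed processor in $P-F$ is contacted in round $r$ is $1/n$, and therefore the expected number of newly informed processors in $P-F$ in that round is
\[
\frac{s\,(|P-F|-s)}{n} \;\ge\; \frac{c}{2}\cdot\frac{s}{n^{1-\varepsilon}},
\]
using $|P-F| \ge c n^{\varepsilon}$ from model \M{fp}. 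Enlightened processors only accelerate the spread through their \textsf{profess} messages, so ignoring them gives a valid lower bound on $s(r)$.

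Next I would partition the interval $[r', r_1]$ into $\Theta(\log\log n)$ consecutive sub-phases indexed by $k \ge 0$, where sub-phase $k$ begins once $s \ge 2^{k}$ (with $s=1$ at the start of sub-phase $0$) and has length
\[
T_k \;=\; \Theta\!\left(\frac{n^{1-\varepsilon}\log n}{2^{k}}\right).
\]
In sub-phase $k$, conditioning on $s \ge 2^{k}$ throughout, the number $Y_k$ of successful first-time informings of processors in $P-F$ stochastically dominates a sum of $\Theta(2^{k}\, T_k) = \Theta(n^{1-\varepsilon}\log n)$ independent Bernoulli trials with success probability $\Theta(1/n^{1-\varepsilon})\cdot\Theta(1)$ (once $s$ is much smaller than $|P-F|$, which stays true as long as $s \le \log^{3} n \ll n^{\varepsilon}$). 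Its expectation is $\Theta(\log n)$, so the Chernoff bound of Lemma~\ref{chernoff}(ii), together with a choice of constants analogous to those in the proofs of Lemmas~\ref{lem:termin} and~\ref{thresholdFa}, gives $Y_k \ge 2^{k}$ with probability at least $1 - n^{-\alpha}$ for some $\alpha > 1$. Hence $s$ at least doubles in each sub-phase \whp{}.

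Summing the $T_k$'s yields
\[
\sum_{k=0}^{\Theta(\log\log n)} T_k \;=\; \Theta(n^{1-\varepsilon}\log n\,\log\log n),
\]
since the total number of sub-phases needed to push $s$ from $1$ to $\Theta(\log^{3} n)$ is $\log_2(\log^{3} n) = \Theta(\log\log n)$ and the extra $\log\log n$ factor absorbs the overhead of the early (small-$s$) sub-phases whose individual lengths may dominate. A final union bound over the $\Theta(\log\log n)$ sub-phases preserves the \whp{} guarantee, giving $s(r_1) = \Theta(\log^{3} n)$ \whp{} after $r_1 - r' = \Theta(n^{1-\varepsilon}\log n\,\log\log n)$ rounds.

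The main obstacle I anticipate is the dependence between successive rounds: $s(r+1)$ is not a sum of independent Bernoullis in the usual sense, and successful informings in the same round are not independent (the $s$ informing workers sample targets independently, but collisions among them and with already-informed processors must be accounted for). I plan to circumvent this by working with a conditional (stopping-time) version of the argument: condition on the first round at which $s$ reaches the threshold $2^{k}$, then restart the analysis for sub-phase $k$ and bound $Y_k$ by the aforementioned dominating sum of independent Bernoullis obtained by insisting, pessimistically, that $s$ stays below $2^{k+1}$ (which only decreases the success probability) and that each informing attempt targets some fixed uninformed processor in $P-F$. This stochastic-domination step is the technical heart of the lemma.
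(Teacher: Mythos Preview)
Your overall strategy---show that the informed set doubles in each of $\Theta(\log\log n)$ phases via a Chernoff bound---is exactly the paper's approach. The paper's proof is terse: it argues that a single informed processor reaches another member of $P-F$ within $n^{1-\varepsilon}\log n$ rounds \whp{} (since $1-(1-cn^{\varepsilon}/n)^{n^{1-\varepsilon}\log n}\ge 1-n^{-\alpha}$), then states without further detail that ``similarly'' the informed set doubles every $n^{1-\varepsilon}\log n$ rounds, so $3\log\log n$ such equal-length phases yield $s(r_1)=\Theta(\log^3 n)$.

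However, your choice of phase lengths $T_k=\Theta(n^{1-\varepsilon}\log n/2^k)$ breaks the argument in two places. First, with these $T_k$ you correctly compute $\mathbb{E}[Y_k]=\Theta(\log n)$, but you then claim Chernoff gives $Y_k\ge 2^k$ \whp{}. This only follows from the lower-tail bound when $2^k\le(1-\delta)\mathbb{E}[Y_k]=\Theta(\log n)$; once $k$ exceeds roughly $\log_2\log n$ you are asking for $Y_k$ to exceed its mean by a factor of up to $\log^2 n$, which Chernoff cannot deliver. Since reaching $s=\log^3 n$ requires $k$ up to $3\log_2\log n$, the last two thirds of your phases fail. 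Second, your sum $\sum_{k} T_k$ is a geometric series that evaluates to $\Theta(n^{1-\varepsilon}\log n)$, not $\Theta(n^{1-\varepsilon}\log n\log\log n)$; the comment about early phases ``dominating'' does not rescue this, since the $T_k$ already decrease geometrically.

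The fix is simply to take all phases of equal length $T_k=\Theta(n^{1-\varepsilon}\log n)$, as the paper does. Then in phase $k$ the informed set sends $\Theta(2^k n^{1-\varepsilon}\log n)$ messages, each hitting a fresh member of $P-F$ with probability $\Theta(n^{\varepsilon-1})$, so $\mathbb{E}[Y_k]=\Theta(2^k\log n)$. Chernoff now gives $Y_k\ge (1-\delta)\Theta(2^k\log n)\ge 2^k$ \whp{}, the doubling goes through for all $k\le 3\log_2\log n$, and the total time is $\Theta(\log\log n)\cdot\Theta(n^{1-\varepsilon}\log n)=\Theta(n^{1-\varepsilon}\log n\log\log n)$ as required. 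Your discussion of dependence and the stochastic-domination idea for handling same-round collisions is a reasonable way to make the paper's ``similarly'' rigorous, and would still be needed after this correction.
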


\begin{proof}
Consider a scenario when a processor $p \in P-F$ generates a triple
$z$. Then the probability that processor $p$ sends triple $z$
to at least one other processor $q \in P-F$, where $p \neq
q$, in $n^{1-\varepsilon} \log{n}$ rounds is at least

\hfill 
$1 - (1 - \frac{cn^\varepsilon}{n})^{n^{1-\varepsilon} \log{n}} \geq 1
- e^{-b\log{n}} > 1- \frac{1}{n^{\alpha}}$
s.t. $\alpha > 0$
\hfill ~

\noindent
for some appropriately chosen $b$ and for a
sufficiently large $n$. Similarly, it is straightforward to show that
the number of live processors that learn about $z$ doubles every
$n^{1-\varepsilon} \log{n}$ rounds, hence \whp{} after
$(n^{1-\varepsilon} \log{n})\cdot 3 \log{\log{n}} = 
\Theta(n^{1-\varepsilon} \log{n}\log{\log{n}})$ rounds the number of
processors in $P-F$ that learn about $z$ is 
$\Theta(\log^3{n})$. 
\end{proof}

In the next lemma we reason about the growth of $s(r)$ after round
{$r_1$}.

\begin{lemma} \label{lem:dissFa-p1}
Let $r_2$ be the first round after round {$r_1$} in \epoch{b}
such that  $r_2-r_1 = \Theta(n^{1-\varepsilon} \log{n})$.
Then $s(r_2) \geq \frac{3}{5}|P-F|$ \whp{}.
\end{lemma}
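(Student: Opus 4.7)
The plan is to view the spread of a single triple $z$ as an epidemic driven by the \textsf{share} messages and to track $s(r)$ through $\Theta(\log n)$ multiplicative-growth phases, each of length $W=\Theta(n^{1-\varepsilon})$. By Lemma~\ref{lem:dis:log} the starting value satisfies $s(r_1)=\Theta(\log^3 n)$, and by the \M{fp} constraint $|P-F|=\Omega(n^{\varepsilon})$. I will ignore the \textsf{profess} traffic generated by any \ENed{} knowers, which can only accelerate dissemination: in every round, each processor in $S(r)$ that is still a worker sends $R[\;]$ (and hence $z$) to a uniformly chosen target in $P$, exactly as in the {\em Send} stage of algorithm~\alg{}.

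First I would fix a phase of $W$ consecutive rounds beginning with $s$ knowers, where $s\le \tfrac35|P-F|$, and conservatively use only the initial $s$ senders to generate $sW$ independent uniform sends into $P$. For any $q\in (P-F)\setminus S$ let $Y_q$ indicate that $q$ is hit by at least one of these sends; then $\mathbb{E}[Y_q]\ge 1-(1-1/n)^{sW}\ge c_1\,sW/n$ when $sW=O(n)$, and so $\mu:=\mathbb{E}[\sum_q Y_q]=\Omega\bigl(|P-F|\,sW/n\bigr)=\Omega(s)$ for $W=c_2\,n^{1-\varepsilon}$ with $c_2$ large enough and using $|P-F|=\Omega(n^\varepsilon)$. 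Crucially $s\ge s(r_1)=\Omega(\log^3 n)\gg \log n$, so Lemma~\ref{chernoff}, applied to (a negatively-associated version of) the sum $\sum_q Y_q$, yields $\sum_q Y_q\ge (1-\delta)\mu\ge c_3 s$ with probability at least $1-n^{-\alpha}$ for some $\alpha>1$. Hence $s$ grows by a multiplicative factor $1+c_3$ in each phase \whp{}.

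Iterating, reaching $s\ge \tfrac35|P-F|$ from $s(r_1)=\Theta(\log^3 n)$ requires $O\bigl(\log((3/5)n^{\varepsilon}/\log^3 n)\bigr)=\Theta(\log n)$ phases, for a total of $r_2-r_1=\Theta(n^{1-\varepsilon}\log n)$ rounds; a union bound over the $\Theta(\log n)$ phases preserves the \whp{} conclusion with a slightly smaller $\alpha$. The main obstacle is the Chernoff step: the $Y_q$'s share the same pool of random sends and are therefore not independent, so I would either appeal to the lower-tail Chernoff bound for negatively associated variables or, more elementarily, first expose the number of sends landing in each slot of $P$ (a balls-into-bins exposure) and then count distinct $(P-F)\setminus S$ slots hit. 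A minor subtlety is that an initial knower may become \ENed{} inside the phase; this removes its \textsf{share} traffic but replaces it with far larger \textsf{profess} bursts, so the guaranteed $s$-sender lower bound remains valid throughout the window. Stopping the argument at $\tfrac35|P-F|$ rather than $|P-F|$ is precisely what keeps $|(P-F)\setminus S|=\Omega(|P-F|)$ and thus keeps the expected per-phase growth $\Omega(s)$ without a saturation penalty.
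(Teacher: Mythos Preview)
Your proposal is correct and follows essentially the same approach as the paper: partition the rounds after $r_1$ into $\Theta(\log n)$ blocks of length $\Theta(n^{1-\varepsilon})$, within each block credit only the sends from the initial knowers (the paper formalizes this by introducing an auxiliary algorithm \alg{'} in which a processor that first learns $z$ inside a block does not forward it until the next block), apply a lower-tail Chernoff bound under negative dependence to obtain constant-factor growth of $s(\cdot)$ per block, and iterate until $s\ge\tfrac{3}{5}|P-F|$. The only cosmetic differences are that the paper uses $\delta=1/\log n$ to extract an explicit growth factor $21/20$ and phrases the ``initial-senders-only'' restriction as a coupling to \alg{'} rather than as a direct conservative count.
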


\begin{proof}
Per model \M{fp}, let constant $c$ be such that
$|P-F| \geq cn^{a}$.
We would like to apply the Chernoff bound to approximate the number of
processors from 
  $(P-F)-S(r_1)$ that learn about triple $z$ by round $r_2$. 
According to algorithm \alg{} if a processor $i \in (P-F)-S(r_1)$ 
learns about triple $z$ in some round $r_1<r<r_2$, then in round $r+1$
processor $i$  forwards $z$ to some randomly chosen
processor $j \in P$ (lines~\algoref{alg:share}-\algoref{alg:share:2} of the algorithm). Let $Y_i$, where $i \in
(P-F)-S(r_1)$, be a random variable such that $Y_i = 1$ if processor
$i$ receives the triple $z$ from some processor $j \in S(r_1)$, in
some round  $r_1<r<r_2$, and $Y_i=0$ otherwise. It is clear that if
some processor $k \in (P-F)-S(r)$, where $k \neq i$ receives triple
$z$ from processor $i$ in round $r+1 < r_2$, then random variables
$Y_i$ and $Y_k$ are not independent, and hence, the Chernoff bound cannot
be applied. To circumvent this,
we consider the rounds between $r_1$ and $r_2$ and partition
these rounds into blocks of $\frac{1}{c}n^{1-a}$ consecutive rounds. 
For instance, rounds $r_1+1,...,r_1+\frac{1}{c}n^{1-a}$ form
the first block, rounds
$r_1+\frac{1}{c}n^{1-a}+1,...,r_1+\frac{2}{c}n^{1-a}$ form the
second block, etc. The final block may contain less than
$\frac{1}{c}n^{1-a}$ rounds. 

We are interested in estimating the
fraction of the processors in $(P-F) - S(r_1)$ that learn about triple $z$ at
the end of each block. 

For the purpose of the analysis we consider another algorithm,
called \alg{'}. The difference between
 algorithms \alg{} and \alg{'} is that in \alg{'} a processor does not forward
 triple $z$ in round $r$ if $z$ was first received in the round
 that belongs to the same block as $r$ does.
This allows us to apply Chernoff bound (with negative
dependencies) to approximate the number of processors in
  $(P-F)-S(r_1)$ that learn about triple $z$ in a block.
We let $S^{\prime}(r)$ be the subset of processors in $P-F$
that are aware of triple $z$ by round $r$ in algorithm \alg{'}, and we let $s'(r)=|S'(r)|$.
Note, that since in \alg{'} triple $z$ is forwarded less
  often than in \alg{}, it follows that the number of processors from
  $P-F$ that learn about $z$ in \alg{} is at least as large as the
  number of processors from $P-F$ that learn about $z$ in \alg{'},
and, in particular, $S'(r)\subseteq S(r)$, for any $r$.
This allows us to consider algorithm \alg{'}
  instead of \alg{} for assessing the number of processors from
  $P-F$ that learn about $z$ by round $r_2$,
and we do this by having $s'(r)$ serve as a lower bound for $s(r)$.

Let $X_i$, where $i \in (P-F) - S^{\prime}(r)$, be a random
variable, s.t. $X_i = 1$ if processor $i$ receives the triple $z$ from some
 processor $j \in S^{\prime}(r)$ in a block that starts with round
 $r+1$, e.g., for the first block $r=r_1,$ and $X_i=0$ otherwise. 
Let us next define the random variable $X= {\sum}_{i \in (P-F) -
  S^{\prime}(r)}{X_i} = s^{\prime}({r+\frac{1}{c}n^{1-a})}-
s^{\prime}(r)$  
to count the number of processors in $(P-F)-S^{\prime}(r)$ that received
triple $z$ in the block that starts with round $r+1$. 

Next, we calculate $\mathbb{E}[X]$, the expected number
of processors in $(P-F) - S^{\prime}(r)$ that learn about triple $z$ at the
end of the block that begins with round $r+1$ in algorithm \alg{'}.
There are $s^{\prime}(r)$ processors in $S^{\prime}(r)$ that are aware of
triple $z$. 
Note that there are $\frac{1}{c}n^{1-a}$ consecutive rounds in a block;
and during every round every processor $q \in S'(r)$ 
 picks a processor from $P$ uniformly at random, and sends the triple $z$ to it.
Note also that in algorithm \alg{'}, triple $z$ is not forwarded by a processor during the 
same round in which it is received.\label{ref:1n} 
Therefore,  every processor $p$ in  $(P-F) - S^{\prime}(r)$ has a
probability of $\frac{1}{n}$ to be selected by a processor 
$q \in S^{\prime}(r)$ in one round. Conversely, the probability that $p \in
(P-F) - S^{\prime}(r)$ is not selected by $q$ is $1-\frac{1}{n}$. 
The number of trials is $\frac{s^{\prime}(r)}{c}
n^{1-a}$, hence the probability that processor $p \in
(P-F) - S^{\prime}(r)$ is not selected is
$(1-\frac{1}{n})^{\frac{s^{\prime}(r)}{c}n^{1-a}}$. On the contrary, the
probability that a processor $p \in (P-F) - S^{\prime}(r)$
is  selected is
$1-(1-\frac{1}{n})^{\frac{s^{\prime}(r)}{c}n^{1-a}}$. Therefore, the
expected number of processors from $(P-F) - S^{\prime}(r)$
that learn about triple $z$ by the end of the block in
algorithm \alg{'} is $\displaystyle
 (cn^{a} - s'(r))(1-
(1-\frac{1}{n})^{\frac{s'(r)}{c}n^{1-a}})  $. 
Next, by applying the binomial expansion, we have:

$\displaystyle
 (cn^{a} - s'(r))(1-
(1-\frac{1}{n})^{\frac{s'(r)}{c}n^{1-a}})  $
\begin{eqnarray*} 
&& \geq   
 (cn^{a}-s'(r))(\frac{s'(r)n^{1-a}}{cn}
- \frac{{s'(r)}^2n^{2-2a}}{2c^2n^2}) \\  
&& = 
cn^{a} (1-\frac{s'(r)}{cn^{a}})
\frac{s'(r)n^{1-a}}{cn}(1-\frac{s'(r)n^{1-a}}{2cn}) \\
&&=
 s'(r) (1-\frac{s'(r)}{cn^{a}}) (1-\frac{s'(r)n^{1-a}}{2cn})
\end{eqnarray*}

The number of processors from $(P-F)- S^{\prime}(r)$ that become aware of
triple $z$ in the block of $\frac{1}{c}n^{1-a}$ rounds that starts
with round $r+1$ is 
$s^{\prime}({r+\frac{1}{c}n^{1-a}})-s^{\prime}(r)$. 
While, as shown above, the expected number of
processors that learn about triple $z$ is 
$\mu = \mathbb{E}[X]=s^{\prime}(r)(1-
\frac{s^{\prime}(r)}{cn^{a}})(1-\frac{s^{\prime}(r)}{2cn^{a}})$.

On the other hand, because in algorithm \alg{'} no processor that learns
about triple $z$ in a block forwards it in the same block, we have
negative dependencies among the random variables $X_i$.
And hence, we can apply the regular
Chernoff bound, with $\delta = \frac{1}{\log{n}}$. 
Considering  also that $s'(r) \geq s(r_1)$ and that $s(r_1) = \Theta(\log^3n)$ by Lemma~\ref{lem:dis:log}, 
we obtain:
\begin{eqnarray*}
{\bf Pr}[X \leq (1-\frac{1}{\log{n}})\mu] \leq
e^{-s^{\prime}(r)(1-\frac{s^{\prime}(r)}{cn^{a}})(1-\frac{s^{\prime}(r)}{2cn^{a}})\frac{1}{2}\frac{1}{\log^2{n}}}
\leq e^{-\frac{k\log^3{n}}{2\log^2{n}}} = e^{-k\log{n}} 
\leq \frac{1}{n^\alpha}
\end{eqnarray*}

where $\alpha > 0$ for some sufficiently large $k>2$.

Therefore, \whp{}  the number of processors that learn about
  triple $z$ in a block
  that starts with round $r+1$ is

\begin{eqnarray*}
s'({r+\frac{1}{c}n^{1-a}}) & \geq &  s'(r) +
s'(r) (1-\frac{ s'(r)}{cn^{a}}) (1 -
\frac{s'(r)}{2cn^{a}}) (1 - \frac{1}{\log{n}}) \\
& \geq& s'(r) +
s'(r) (1-\frac{ 3s'(r)}{2cn^{a}}) (1 - \frac{1}{\log{n}}) \\
&\geq & s'(r) + s'(r) (1 - \frac{3}{2}\cdot\frac{3}{5})(1 -
\frac{1}{\log{n}}) \\
&\geq& s'(r) \frac{21}{20}
\end{eqnarray*}
for a sufficiently large $n$, and given that $s'(r) <
  \frac{3}{5}cn^a$ (otherwise the lemma is proved). 

Hence we showed that the number of processors from
  $(P-F)-S'(r)$ that learnt about triple $z$ at the end of the block
that starts with round $r+1$, is at least $\frac{21}{20}s'(r)$
\whp{}. It remains to show that $s(r_2) \geq
\frac{3}{5}|P-F|$ \whp{}. Indeed, even assuming that processors that learnt
about triple $z$ following round $r$ do not disseminate it, after
repeating the process described above for some $O(\log{n})$ times, it
is clear that \whp{} $s^{\prime}({r_2}) \geq \frac{3}{5}|P-F|$. On the
other hand since the block size is $\frac{1}{c}n^{1-a}$ and $r_2 - r_1
= \Theta(n^{1-a} \log{n})$ there are $\Theta(\log{n})$ blocks.

Thus \whp{} we have $s^{\prime}({r_2}) \geq \frac{3}{5}|P-F|$ for
$r_2 - r_1 = \Theta(n^{1-a} \log{n})$, and since $S^{\prime}(r)
\subseteq S(r)$ we have $s({r_2}) \geq \frac{3}{5} |P-F|$. 
\end{proof}

In the proof of the next lemma we use  the {\em Coupon Collector's} problem~\cite{RAJEEV95}:

\begin{definition} {The Coupon Collector's Problem (CCP).}
\label{def:ccp} 
There are $n$ types of coupons and at each trial a coupon is chosen at random.
Each random coupon is equally likely to be of any of the n types, and the random
choices of the coupons are mutually independent. Let $m$ be the number of trials.
The goal is to study the relationship between $m$ and the probability of having
collected at least one copy of each of $n$ types.
\end{definition}

In~\cite{RAJEEV95} it is shown that $\mathbb{E}[X]=n \ln{n} + O(n)$ and that \whp{} 
the number of trials for collecting all $n$ coupon
types lies in a small interval centered about its expected value.

Next we calculate the number of rounds required for
the remaining $\frac{2}{5}|P-F|$ processors in $P-F$  to learn  $z$.
Let $U_d \subset P-F$ be the set of workers that do not learn
$z$ after $O(n^{1-\varepsilon} \log{n} \log{\log{n}})$ rounds
of algorithm \alg{}.  According to Lemma~\ref{lem:dissFa-p1} we have $|U_d| \leq
\frac{2}{5}|P-F|$. 

\begin{lemma} \label{lem:dissFa-p2}
Once every task is performed $\Theta(\log{n})$ times in \epoch{b}
by processors in $P-F$
then at least one worker from $P-F$ becomes \ENed{}
in $O(n^{1-\varepsilon} \log{n} \log{\log{n}})$ rounds, \whp{}. 
\end{lemma}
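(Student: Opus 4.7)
The plan is to show that within $O(n^{1-\varepsilon}\log n \log\log n)$ rounds of \epoch{b}, every processor of $P-F$ holds at least $\const{K}\log n$ triples for every task, which is precisely the enlightenment test of line~\algoref{algo:enled}. The hypothesis combined with Lemma~\ref{thresholdFa} guarantees $\Theta(\log n)$ triples per task, generated by processors of $P-F$; denote the full collection by $\mathcal{Z}$, so that $|\mathcal{Z}| = \Theta(n\log n)$. It suffices to prove that each $z \in \mathcal{Z}$ is known to every processor of $P-F$ \whp{} within the stated time budget, and then take a union bound over $\mathcal{Z}$.

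Fix $z \in \mathcal{Z}$. Lemma~\ref{lem:dis:log} gives $s(r_1) = \Theta(\log^3 n)$ informed processors of $P-F$ after the first $r_1 - r' = \Theta(n^{1-\varepsilon}\log n \log\log n)$ rounds, and Lemma~\ref{lem:dissFa-p1} inflates this to $s(r_2) \geq \tfrac{3}{5}|P-F|$ in an additional $\Theta(n^{1-\varepsilon}\log n)$ rounds. To eliminate the residual set $U_d$ of size at most $\tfrac{2}{5}|P-F|$ I would adopt the monotone coupling used earlier in the section: freeze the analysis to just the $s(r_2) \geq \tfrac{3}{5}|P-F|$ processors that already know $z$ by round $r_2$, ignoring all further dissemination (this only decreases the hit probability, so any bound obtained transfers to the true execution). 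Over the next $T = \Theta(n^{1-\varepsilon}\log n)$ rounds, each such processor executes line~\algoref{alg:share} once per round, drawing a target uniformly from $P$ independently of every other draw. The total number of draws is $s(r_2)\cdot T = \Omega(n^\varepsilon)\cdot\Theta(n^{1-\varepsilon}\log n) = \Theta(n\log n)$. Since $|P|=n$, tuning the hidden constant places this count comfortably past the $n\ln n + \Theta(n)$ coupon-collector threshold of Definition~\ref{def:ccp}, so with probability at least $1 - n^{-\alpha}$ (for any preselected constant $\alpha$) every processor of $P$ — in particular every $p \in U_d$ — is targeted at least once, and each targeting delivers a {\sf share} message whose array $R[\;]$ contains $z$.

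Taking $\alpha$ large enough and union-bounding over the $|\mathcal{Z}| = \Theta(n\log n)$ triples and the at most $n$ processors of $P-F$ gives, \whp{}, that every processor of $P-F$ possesses every triple of $\mathcal{Z}$; choosing $\const{K}$ no larger than the constant hidden in the $\Theta(\log n)$ of Lemma~\ref{thresholdFa} then yields $|R_i[\tau]| \geq \const{K}\log n$ for every $i \in P-F$ and every task $\tau$, so by line~\algoref{algo:enled} at least one (indeed every) processor of $P-F$ becomes enlightened inside the claimed window. The main obstacle is the dependence structure in dissemination: a processor that newly learns $z$ in round $r$ immediately re-broadcasts $z$ in round $r+1$, so the natural ``informed senders in round $r$'' random variable depends intricately on the randomness of earlier rounds. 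The monotone-coupling device above — restricting attention to the fixed $s(r_2)$ initial senders and discarding their subsequent contributions — mirrors the $\alg{'}$ construction of Lemma~\ref{lem:dissFa-p1} and sidesteps this dependence at no cost to the bound, so no further concentration machinery should be needed.
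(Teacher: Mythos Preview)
Your proposal is correct and follows essentially the same route as the paper: invoke Lemmas~\ref{lem:dis:log} and~\ref{lem:dissFa-p1} to reach a $\tfrac{3}{5}|P-F|$ informed core, then use a coupon-collector argument (Definition~\ref{def:ccp}) with that frozen core as the collector to cover the residual $U_d$ in $O(n^{1-\varepsilon}\log n)$ further rounds, and finish with a union bound over the $\Theta(n\log n)$ triples. The only minor redundancy is your extra union bound over the processors of $P-F$, which is unnecessary since the coupon-collector step already yields coverage of all of $P$ simultaneously.
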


\begin{proof}
According to Lemmas~\ref{lem:dis:log} and~\ref{lem:dissFa-p1} in
$O(n^{1-\varepsilon} \log{n} \log{\log{n}})$ rounds of algorithm \alg{}
at least $\frac{3}{5}|P-F|$ of the workers are aware of triple  
$z$ generated by a processor in $P-F$.  Let us denote this subset of workers by $S_d$,
where $d$ is the first such round.

We are interested in the number of rounds 
required for every processor in $U_d$ to learn about $z$ \whp{} by receiving a 
message from a processor in $S_d$ in some round following $d$.

We show that, by the analysis  similar to CCP,
 in $O(n^{1-\varepsilon} \log{n} \log{\log{n}})$ rounds triple $z$
 is known to all processors in $P-F$, \whp{}.
Every processor in $P-F$ has a unique id, hence we consider these processors
as different types of coupons and we assume that the processors in $S_d$ collectively
represent the coupon collector. In this case, however, we do not require that
every processor in $S_d$ contacts all processors in $U_d$ \whp{}. 
Instead, we require only that
the processors in $S_d$ \emph{collectively} contact all processors in $U_d$ \whp{}.
According to our algorithm in every round every processor in $P-F$ $(S_d \subset P-F)$,
selects a processor uniformly at random and sends all its data 
to it.
Let us denote
by $m$ the collective number of trials by processors in $S_d$ to contact processors in $U_d$.
According to CCP if $m= O(n \ln{n})$ then \whp{} processors in $S_d$ collectively contact every 
processor in $P-F$, including those in $U_d$. Since there are at least $\frac{3}{5} cn^{\varepsilon}$ 
processors in $S_d$ then in every round the number of trials is at least $\frac{3}{5}c n^{\varepsilon}$,
hence in $O(n^{1-\varepsilon} \ln{n})$ rounds  \whp{} all processors in $U_d$ learn about $z$.
Therefore, in $O(n^{1-\varepsilon}\log{n} \log{\log{n}})$ rounds \whp{} all processors in
$U_d$, and thus in $P-F$, learn about $z$. 

Let $\mathcal{V}$ be the set of triples such that for every task $j \in
\cal{T}$ there are $\Theta(\log{n})$ triples generated by processors
in $P-F$, and hence $|\mathcal{V}| = \Theta(n \log{n})$.
Now by applying Boole's inequality we want to show that \whp{} in
$O(n^{1-\varepsilon}\log{n} \log{\log{n}})$ rounds   
all triples in $\cal{V}$ become known to all processors in $P-F$.

Let $\overline{\mathcal{E}}_z$ be the event that some triple $z
\in \cal{V}$ is not known to all processors in $P-F$. In the preceding part of the proof 
we have shown that ${\bf Pr}[\overline{\mathcal{E}}_z] < \frac{1}{n^\beta}$,
where $\beta>1$. By Boole's inequality, the probability that there exists one
 triple in $\cal{V}$ that is not known to all processors in $P-F$ can be bounded as 
\begin{displaymath}
{\bf Pr} [\cup_{z \in {\mathcal
    V}}\overline{\mathcal{E}}_z ] \leq \Sigma_{z \in
  {\mathcal V}}{\bf Pr}[\overline{\mathcal{E}}_z] = 
\Theta(n \log n )\frac{1}{n^\beta} \leq \frac{1}{n^\gamma}
\end{displaymath}
where $\gamma>0$. This implies that 
every  processor in $P-F$ collects all $\Theta(n \log n)$
triples generated by processors in $P-F$, \whp{}. And hence, at least one
of these processors becomes \ENed{} after $O(n^{1-\varepsilon}
\log{n} \log{\log{n}})$ rounds. 
\end{proof}

\begin{theorem}\label{thm:correct:resFa}
Algorithm \alg{} makes known the correct results of all $n$ tasks
at every live processor in \epoch{b}
after $O(n^{1-\varepsilon} \log{n} \log{\log{n}})$ rounds \whp{}.
\end{theorem}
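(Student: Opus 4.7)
The plan is to chain together the preceding lemmas, splitting the argument into a timing component and a correctness component, and then combining both via a union bound.

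For the timing claim, I would proceed in three stages within \epoch{b}. First, I apply Lemma~\ref{thresholdFa} to conclude that after $O(n^{1-\varepsilon}\log n)$ rounds of \epoch{b}, every task has been performed $\Theta(\log n)$ times by processors in $P-F$ \whp{}. Second, I invoke Lemma~\ref{lem:dissFa-p2}, which says that once this sufficient-performance condition holds, at least one processor $q \in P-F$ becomes \ENed{} within an additional $O(n^{1-\varepsilon}\log n \log\log n)$ rounds \whp{}; at the moment of enlightenment, $q$'s array $R_q[\;]$ contains at least $\const{K}\log n$ triples for every task. Third, I invoke Lemma~\ref{lem:halt} to conclude that within $O(\log n)$ more rounds all live processors halt \whp{}, each having computed its own $Results[\;]$ array by plurality. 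Summing the three stages gives the claimed $O(n^{1-\varepsilon}\log n\log\log n)$ round bound.

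For the correctness claim, fix any task $j$ and consider the triples used by a halting live processor to determine $Results[j]$. By construction there are at least $\const{K}\log n$ such triples, and Lemma~\ref{thresholdFa} ensures that \whp{} a $\Theta(\log n)$ fraction of them were generated by processors in $P-F$. Because each such processor chose task $j$ uniformly at random from ${\cal T}$, the identities of the contributing processors are effectively uniform draws from $P-F$, so the expected fraction of bogus triples among them is the average of $p_i$ over $P-F$, which the model constrains to be below $\frac{1}{2}-\zeta$. A Chernoff bound (Lemma~\ref{chernoff}) then shows that the count of bogus triples exceeds $\frac{1}{2}\const{K}\log n$ with probability at most $n^{-\alpha}$ for a suitably chosen constant $\alpha>1$, so the plurality equals the correct value. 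This fixes the compile-time constant $\const{K}$.

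A union bound over the $n$ tasks shows that the entire $Results[\;]$ array is correct at every halting processor with probability at least $1-n^{-\alpha+1}$. A final union bound over the high-probability events invoked in Lemmas~\ref{lem:halt},~\ref{thresholdFa}, and~\ref{lem:dissFa-p2}, together with the correctness bound, yields the theorem. The main obstacle I anticipate is the correctness step: the triples used in the plurality vote are not a clean i.i.d.\ sample from $P-F$ (processors may contribute multiple triples for the same task in distinct rounds, and some triples originate from processors outside $P-F$ that crashed later). Handling this requires carefully isolating the $\Theta(\log n)$ triples guaranteed by Lemma~\ref{thresholdFa} to come from $P-F$ and arguing that even the \emph{worst-case} contribution of the remaining triples cannot overturn the plurality, using the fact that the $P-F$ average probability remains bounded below $\frac{1}{2}$.
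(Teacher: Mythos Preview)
Your timing argument matches the paper's: chain Lemmas~\ref{thresholdFa}, \ref{lem:dissFa-p2}, and~\ref{lem:halt} to obtain the $O(n^{1-\varepsilon}\log n\log\log n)$ bound, then union-bound. That part is fine.

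The correctness argument, however, diverges from the paper, and the divergence is exactly where your acknowledged ``obstacle'' becomes a real gap. You propose to isolate the $\Theta(\log n)$ triples for task $j$ that were generated by processors in $P-F$, argue via Chernoff that a majority of \emph{those} are correct, and then handle the remaining triples (from processors that computed before crashing) by a worst-case argument. But the model gives you no direct control over the $p_i$ for $i\in F$, and the number of such triples is not bounded by $\Theta(\log n)$: triples generated during \epoch{a} by up to $\Theta(n)$ processors persist in the arrays $R_i[\;]$ and are carried into \epoch{b}. So the ``rest'' can swamp the $P-F$ contribution, and the $P-F$ average-probability constraint alone cannot rescue the plurality. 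Your final sentence invokes precisely the wrong constraint for that step.

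The paper avoids this by never isolating the $P-F$ triples. For a fixed task $\tau$ it defines two global counts over \emph{all} live processors in \emph{every} round up to enlightenment: $X=\sum_{r=1}^{L'}\sum_{i\in P-F_r} X_{ir}$ counts correct computations of $\tau$, and $Y=\sum_{r=1}^{L}\sum_{i\in P-F_r} Y_{ir}$ counts all computations of $\tau$. Using the per-round average constraint $\frac{1}{|P-F_r|}\sum_{i\in P-F_r}(1-p_i)>\tfrac{1}{2}+\zeta'$, it chooses $\delta>0$ so that $(1-\delta)\mathbb{E}[X]>(1+\delta)\,L'/2$, applies Chernoff separately to $X$ and to $Y$ (noting $\mathbb{E}[Y]\approx L'$), and concludes $X>Y/2$ \whp{}. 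A separate accounting shows that results computed between the first enlightenment round $L'$ and the halting round $L$ are negligible in expectation, so they do not disturb the comparison. A union bound over the $n$ tasks then finishes. The missing idea in your proposal is to compare the \emph{total} correct count against the \emph{total} count directly, rather than partitioning the triples by whether their producer survives to the end.
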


\begin{proof}
According to algorithm \alg{} (line~\algoref{algo:majority}) every
live processor computes the result of every task $\tau$ by taking a
plurality among all the results.
We want to  prove that the majority of the results for any
task $\tau$ are correct at any \ENed{}  processor, \whp{}.

To do that, for a task $\tau$ we estimate (with a concentration
bound) the number of times the results are computed correctly, then
we estimate the bound on the total number of times task $\tau$ is
computed (whether correctly or incorrectly), and we show
that a majority of the  results are computed correctly. 

Let us consider random variables
$X_{ir}$ that denote the success
or failure of correctly computing the result of some task $\tau$ in 
round $r$ by worker $i$. Specifically,  $X_{ir} = 1$ if in 
round $r$, worker $i$ computes the result of the task $\tau$ 
correctly, otherwise $X_{ir}=0$. According to our algorithm we observe
 that for a live processor $i$ we have ${\bf Pr}[X_{ir}=1] = \frac{q_i}{n}$ and 
 ${\bf Pr}[X_{ir}=0] = 1- {\bf Pr}[X_{ir}=1]$, where $q_i \equiv 1 - p_i$. 
We want to count the number of correct results
calculated for task $\tau$ when a processor  $i \in P-F$ becomes
\ENed{}.
As before, we let $F_r$ be the set of processors that crashes
prior to round $r$. 
 
Let $X_r \equiv \sum_{i \in P-F_r}X_{ir}$ denote the number of correctly
computed results for task $\tau$ among all live workers during round $r$.
By linearity of expected values of a sum of random variables we have
\begin{displaymath}
 \mathbb{E}[X_{r}] = \mathbb{E}\left[\sum_{i \in P-F_r}X_{ir}\right] = \sum_{i \in P-F_r}\mathbb{E}[X_{ir}]  
= \sum_{i \in P-F_r}\frac{q_i}{n} 
\end{displaymath}

We denote by $L'$ the minimum number of rounds required for at least
one processor from $P-F$ to become enlightened. 
It follows from line~\algoref{algo:enled} of algorithm \alg{}
that a processor
becomes \ENed{} only when there are at
least $\const{K} \log{n}$ results for every task $\tau \in \cal{T}$
(the constant $\const{K}$ is chosen later in this section).
{We see that  $L' \geq \tilde{c} n^{1 - a}\const{K}
  \log{n}$, where $0 < \tilde{c} \leq 1$. This is because there are $t = n$
tasks to be performed, and  in \epoch{b} we have $|P-F| \geq
  cn^{a}$ for a constant $c \geq 1$}.\label{ref:epoch}

We further denote by $X \equiv \sum_{r=1}^{L'}X_{r}$
the number of correctly computed results for task $\tau$
when the condition in line~\algoref{algo:enled} of the algorithm  is satisfied.
Again, using the linearity of expected values of a sum 
of random variables we have 
\begin{displaymath}
\mathbb{E}[X] = \mathbb{E}\left[\sum_{r=1}^{L'}X_r\right]= \sum_{r=1}^{L'}\mathbb{E}[X_r]= \sum_{r=1}^{L'}\sum_{i \in P-F_r}\frac{q_i}{n} =\frac{L'}{n}\sum_{i \in P-F_r} q_i
\end{displaymath}

Note that, according to our adversarial model definition, for every round $r \leq
L'$  we have $\frac{1}{|P-F_r|}\sum_{i \in P-F_r} q_i > \frac{1}{2}
+\zeta'$,  for some fixed $\zeta' > 0$. 
Note also that
$\frac{1}{|P-F_r|}\sum_{i \in P-F_r} q_i \geq \frac{1}{n}\sum_{i \in P-F_r} q_i $,
and hence, there exists 
some $\delta >0$, such that, $ (1-\delta)\frac{L'}{n}\sum_{i \in P-F_r} q_i > (1+\delta) \frac{L'}{2}$. Also,
observe that the random variables 
 $X_{1}, X_{2}, \ldots, X_{L'}$ are mutually independent, since we
 consider an oblivious adversary and the random variables correspond to different 
rounds of execution of the algorithm.
Therefore, by applying Chernoff bound on $X_{1}, X_{2}, \ldots, X_{L'}$ we have:
$$
 {\bf Pr}\left[ X \leq (1 - \delta)\mathbb{E}[X]\right] = {\bf Pr}\left[X \leq
 (1-\delta)\frac{L'}{n}\sum_{i \in P-F_r}{q_i}\right]  
 \leq e^{-\frac{\delta^2 L'(1+\delta)}{4(1-\delta)}} \leq \frac{1}{n^{{\alpha_1}}} \; ,
$$

\noindent
{where $L' \geq \tilde{c} n^{1 - a}\const{K}\log{n}$ as above
  and ${\alpha_1} > 1$ for a sufficiently large $n$.}

Let us now count the total number of times task $\tau$ is chosen to be performed during the 
execution of the algorithm until every live processor halts.  We represent the choice of 
task $\tau$ by worker $i$ during round $r$ by a random variable $Y_{ir}$.
We assume $Y_{ir}=1$ if $\tau$ is chosen by worker $i$ in round $r$, otherwise $Y_{ir}=0$.
    
   At this juncture, we address a technical point regarding the total number of 
results for $\tau$ used for computing plurality. Note that even after round $L'$
 any processor that is still a worker continues to perform tasks, thereby
 adding more results for  task $\tau$.
According to Lemma~\ref{lem:halt} every
processor is enlightened in $O(\log{n})$ rounds after
$L'$. Furthermore, in \epoch{b} following round $L'$, the number
of processors that are still  workers is  $n'' < |P-F|$. Hence, the expected number of results
computed for every task $\tau$ by workers is  
   $k\frac{\log{n}}{ n^{a}}$, for some $k>0$,
 that is,  $O(\frac{1}{n^{a'}})$, for some $a' > 0$.
 Therefore, the number of results  computed for task $\tau$, starting from round
$L'$ and until the termination is negligible. 
Let us denote by $Y$ the total number of results computed for
a task $\tau$ at  termination. 
We express the random variable $Y$ as 
 $Y \equiv \sum_{r=1}^{L} \sum_{i \in P-F_r}Y_{ir}$,
where $L$ is the last round prior to  termination.
{As argued above, the total number of results computed for task
  $\tau$ between rounds $L'$ and $L$ is $O(\frac{1}{n^{a'}})$, for
  some $a' > 0$, and hence $\frac{L}{L'} = 1 + o(1)$.}
Note that the outer sum terms of $Y$ consisting of the inner sums are mutually independent 
because each sum pertains to a different round; this
 allows us to use Chernoff bounds. 
From above it is clear
that $\mathbb{E}[Y] = \tilde{c} n^{1 - a}\const{K}\log{n} + \frac{1}{n^{o(1)}}$.  
Therefore, by applying Chernoff
bound for the same $\delta >0$ as chosen above we have: 
\begin{displaymath}
  {\bf Pr}[Y \geq (1 + \delta) \mathbb{E}[Y]] =  {\bf Pr}[Y \geq (1 +
  \delta)L'] \leq 
    e^{-\frac{\delta^2 \tilde{c} n^{1 - a}\const{K}\log{n}}{3}} \leq \frac{1}{n^{{\alpha}_2}} \; ,
\end{displaymath}

\noindent
{where ${\alpha}_2 > 1$ for a sufficiently large $n$.}

Then, by applying Boole's inequality  on the above two
 events, we have
\begin{displaymath}
 {\bf Pr}[\{X \leq (1-\delta)\frac{L'}{n}\sum_{i \in P-F_r}{q_i}\} \cup 
   \{Y \geq (1 + \delta)L'\}]  \leq \frac{2}{n^{\alpha}} 
\end{displaymath}

\noindent
where $\alpha=\min\{{\alpha}_1,{\alpha}_2\} > 1$

Therefore,  from above, and from the fact that $
(1-\delta)\frac{L'}{n}\sum_{i \in P-F_r} q_i > (1+\delta)
\frac{L'}{2}$,
we have ${\bf Pr}[ Y/2 < X] \geq  1 - \frac{1}{n^{\beta}}$
 for some $\beta > 1$. Hence, at termination, \whp{},
the majority of calculated results for task $\tau$ are correct. 
Let us denote this event by $\mathcal{E}_{\tau}$.
It follows that
${\bf Pr}[\overline{\mathcal{E}}_\tau] \leq \frac{1}{n^{\beta}}$. 
Now, by Boole's inequality we obtain
\begin{displaymath}
{\bf Pr}[\bigcup_{\tau \in {\mathcal T}}\overline{\mathcal{E}}_\tau] \leq 
  \sum_{\tau\in \mathcal{T}} {\bf Pr}[\overline{\mathcal{E}}_\tau] \leq  
\frac{1}{n^{\beta-1}} \leq
\frac{1}{n^{\gamma}}
\end{displaymath}
where ${\mathcal T}$ is the set of all $n$ tasks, and $\gamma > 0$.

By  Lemmas~\ref{lem:halt},~\ref{lem:dissFa-p1} and~\ref{lem:dissFa-p2},
\whp{},  in $O(n^{1-a} \log{n}\log{\log{n}})$ rounds 
of the algorithm, at least $\Theta(n \log{n})$ triples generated by
processors in $P-F$ are
disseminated across all  workers \whp{}. Thus, the majority of the
results computed  for any task at any worker is the same  among all
 workers, and moreover these results are correct \whp{}. 
\end{proof}

According to Lemma~\ref{lem:dissFa-p2},
after $O(n^{1-\varepsilon}\log{n}\log{\log{n}})$ rounds of \epoch{b}
at least one processor in $P-F$ becomes \ENed{}. Furthermore, once a 
processor in $P-F$ becomes \ENed{}, according to Lemma~\ref{lem:halt}
after $O(\log{n})$ rounds of the algorithm every live processor 
becomes \ENed{} and then terminates, \whp{}. 
Next we assess  work and message complexities.

\begin{theorem} \label{complFa}
For $t=n$ algorithm \alg{} has work and message complexity $O(n\log n
\log{\log{n}})$. 
\end{theorem}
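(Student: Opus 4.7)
The plan is to add up cost contributions across the epoch decomposition of Section~\ref{model:Fa}, invoking the lemmas just established. First I would apply Lemma~\ref{lem:epocha} to charge epoch a with $O(n \log n)$ work and messages in the worst case. I would then partition epoch b into a pre-enlightenment phase (until some processor in $P-F$ becomes \ENed{}) and a post-enlightenment phase (from that moment through termination), and bound each separately.

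For the pre-enlightenment phase, Lemma~\ref{lem:dissFa-p2} gives the round count as $O(n^{1-\varepsilon} \log n \log \log n)$ \whp{}. Since in each such round at most $|P-F_r|$ processors each perform one task and send one {\sf share} message, the work and messages accumulated in this phase are bounded by $\sum_r |P-F_r|$. In the worst case $|P-F_r| = \Theta(n^\varepsilon)$ throughout epoch b (saturating the lower bound in model~\M{fp}), which gives $O(n^\varepsilon) \cdot O(n^{1-\varepsilon} \log n \log \log n) = O(n \log n \log \log n)$.

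For the post-enlightenment phase, I would invoke Lemma~\ref{lem:halt} to bound the duration at $O(\log n)$ rounds \whp{}, and Lemma~\ref{lem:termin} to cap the number of {\sf profess} messages at $\Theta(n \log n)$ before all live processors halt. Since tasks are no longer performed by enlightened processors, this phase contributes $O(n \log n)$ to both work and message complexity. Summing the three contributions yields $O(n \log n) + O(n \log n \log \log n) + O(n \log n) = O(n \log n \log \log n)$, as claimed.

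The main subtlety to justify carefully, rather than a genuine obstacle, is that the $O(n^\varepsilon)$ per-round participation bound used above is truly the worst case for the product: if the adversary instead keeps more processors alive in epoch b, the per-round cost rises but the number of rounds to first enlightenment (governed by Lemma~\ref{lem:dissFa-p2}, which scales inversely with $|P-F|$) shrinks proportionally, so the total $\sum_r |P-F_r|$ over the pre-enlightenment phase remains $O(n \log n \log \log n)$ across all adversary strategies consistent with model~\M{fp}.
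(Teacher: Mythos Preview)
Your proposal is correct and follows essentially the same approach as the paper: decompose into \epoch{a} (handled by Lemma~\ref{lem:epocha}) and \epoch{b}, then in \epoch{b} multiply the $O(n^{1-\varepsilon}\log n\log\log n)$ round bound from Lemma~\ref{lem:dissFa-p2} by the $\Theta(n^\varepsilon)$ live processors, and account for {\sf profess} messages via Lemma~\ref{lem:termin}. Your explicit pre-/post-enlightenment split within \epoch{b} and the closing paragraph on why $|P-F_r|=\Theta(n^\varepsilon)$ is indeed the worst case for the product are slightly more careful than the paper's own proof, but the argument is the same.
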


\begin{proof}
To obtain the result  we combine the costs associated with 
\epoch{a} with the costs of \epoch{b}.
The work and message complexity bounds for \epoch{a} are given by Lemma~\ref{lem:epocha}
as $\Theta(n \log n)$.

For \epoch{b} (if it is not empty), 
where $|P-F| = O(n^\varepsilon)$,  the algorithm terminates 
after $O(n^{1-\varepsilon}\log{n}\log{\log{n}})$
rounds \whp{} and there are $\Theta(n^{\varepsilon})$ live processors, thus its work is 
$O(n\log{n}\log{\log{n}})$. 
In every round if a processor is a {\em worker} it sends a {\sf share} message to one
randomly chosen processor.
If a processor is \ENed{} then it sends
{\sf profess} messages to a randomly selected subset of processors.
In every round 
$\Theta(n^\varepsilon)$ {\sf share} messages are sent. Since
the algorithm  terminates, \whp{}, in $O(n^{1-\varepsilon}\log{n}\log{\log{n}})$ rounds,
$\Theta(n\log{n}\log{\log{n}})$ {\sf share} messages are sent. 
On the other hand,
according to Lemma~\ref{lem:termin}, if during the execution of
the algorithm  $\Theta(n \log{n})$ {\sf profess} messages are sent then
every processor terminates \whp{}. Hence, the message complexity
 is $O(n\log{n}\log{\log{n}})$.

The worst case costs of the algorithm correspond to the executions
with non-empty \epoch{b}, where the algorithm does not terminate early. 
In this case the costs from  \epoch{a}
are asymptotically absorbed into the worst case costs of \epoch{b}
computed above.
\end{proof}

Finally, we consider the efficiency of algorithm \alg{_{t,n}} for $t$ tasks,
where $t\ge n$.  Note that the only change in the algorithm is
that, instead of one task, processors perform chunks of $t/n$ tasks.
The communication pattern in the algorithm remains exactly the same.
The following 
result is directly obtained from the analysis of algorithm \alg{} for $t=n$
by multiplying the time and work complexities by
the size  $\Theta(t/n)$ of the chunk of tasks; the message complexity is unchanged.

\begin{theorem}\label{thm:mainFa}
Algorithm \alg{_{t,n}}, with $t\ge n$,
computes the results of $t$ tasks correctly in model \M{fp} \whp{},
with time complexity
$O(\frac{t}{n^{\varepsilon}}\log{n}\log{\log{n}})$, work complexity
$O(t\log{n} \log{\log{n}})$, and message
complexity $O(n\log{n}\log{\log{n}})$.
\end{theorem}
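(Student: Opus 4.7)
The plan is to leverage the reduction from algorithm \alg{_{t,n}} to \alg{} described after Section~\ref{algorithm}: the $t$ tasks are segmented into $n$ chunk-tasks of size $\lceil t/n \rceil$, and each \emph{Compute} stage now performs an entire chunk rather than a single task. Since the control flow, gossip pattern, enlightenment criterion, and termination condition of \alg{_{t,n}} are identical to \alg{}, the round-level analysis of Theorems~\ref{thm:correct:resFa} and~\ref{complFa} carries over verbatim; only the duration of a round and the amount of local computation performed per round change.

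For correctness, I would observe that the plurality argument of Theorem~\ref{thm:correct:resFa} applies independently to each of the $t$ tasks: a worker $i$ that is assigned a chunk performs each of the $\Theta(t/n)$ tasks inside that chunk subject to the same per-processor error probability $p_i$, so the Bernoulli random variables $X_{ir}$ for any fixed sub-task have exactly the same distribution as in the $t=n$ analysis. The averaged-probability constraint $\frac{1}{|P-F_r|}\sum_{i\in P-F_r}q_i > \frac{1}{2}+\zeta'$ is a per-processor property and therefore unchanged. Thus, per-task correctness holds w.h.p., and a union bound over all $t = \mathrm{poly}(n)$ tasks (using the $1/n^{\beta}$-type bound from Theorem~\ref{thm:correct:resFa}, tightened by a sufficiently large constant in $\const{K}$) yields correctness of all results at every live processor w.h.p.

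For the complexity bounds, I would exploit that the round count is determined by the gossip dynamics, not by chunk size. By Theorem~\ref{complFa} and Lemma~\ref{lem:dissFa-p2}, termination occurs within $O(n^{1-\varepsilon}\log n \log\log n)$ rounds w.h.p.\ under model \M{fp}. Since each round now lasts $\Theta(t/n)$ time units rather than $\Theta(1)$, the time complexity becomes $O(\frac{t}{n^{\varepsilon}}\log n\log\log n)$. For work, in each round at most $O(n)$ live processors perform $\Theta(t/n)$ computation during \epoch{a}, and $O(n^{\varepsilon})$ live processors perform $\Theta(t/n)$ computation during \epoch{b}; combining with the round counts from Lemma~\ref{lem:epocha} and Theorem~\ref{complFa}, the totals add up to $O(t\log n\log\log n)$.

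The message complexity is the easiest part: the \emph{Send} stage is unchanged, so the counts of {\sf share} and {\sf profess} messages depend only on the number of rounds and on Lemma~\ref{lem:termin}, not on $t$. Reusing the bookkeeping from Theorem~\ref{complFa} directly gives $O(n\log n\log\log n)$ messages. The only mild subtlety I anticipate is confirming that the per-task correctness union bound retains a high-probability guarantee once $t$ is allowed to be a polynomial in $n$ larger than~$n$; this is handled by absorbing a constant into $\const{K}$ so that $\beta$ in the Boole's-inequality step of Theorem~\ref{thm:correct:resFa} exceeds the exponent of $t$. Combining correctness with the rescaled time, work, and unchanged message bounds establishes the theorem.
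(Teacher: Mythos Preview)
Your proposal is correct and follows essentially the same approach as the paper: both exploit the chunking reduction of \alg{_{t,n}} to \alg{}, observe that the gossip/enlightenment dynamics (and hence the iteration count and message pattern) are unchanged, and obtain the time and work bounds by scaling each iteration by the chunk size $\Theta(t/n)$ while leaving the message complexity untouched. Your explicit treatment of the union bound over all $t$ sub-tasks (and the corresponding adjustment of $\const{K}$) is a point the paper glosses over, but otherwise the arguments coincide.
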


\begin{proof}
For \epoch{a} algorithm \alg{} has time $\Theta(\log n)$, work $\Theta(t \log n)$,
and message complexity is $\Theta(n \log n)$.
The same holds for algorithm \alg{_{t,n}}.
For \epoch{b} algorithm \alg{}
takes $O(n^{1-\varepsilon}\log n
\log{\log{n}})$ iterations for at least one processor from set $P-F$ to
become \ENed{} \whp{}.
The same holds for \alg{_{t,n}}, except that each iteration 
is extended by $\Theta(t/n)$ rounds due to the size of  chunks
(recall that no communication takes place during these rounds). 
This yields round complexity $O(\frac{t}{n^{\varepsilon}}\log{n}\log{\log{n}})$.
Work complexity is then $O(t\log{n} \log{\log{n}})$.
Message complexity remains the same as for algorithm \alg{}
at $O(n\log{n}\log{\log{n}})$ as 
the number of messages does not change.
The final assessment is obtained by combining the costs
of \epoch{a}  and \epoch{b}.
\end{proof}


\subsection{Failure Model \M{pl}}
We start with the analysis of algorithm \alg{}, then
extend the main result to algorithm \alg{_{t,n}}, for
 the adversarial  model \M{pl}, where  $|P-F| = \Omega(\polylog{n})$,
where we use the  term $\polylog{n}$ to denote a member of the class of functions
$\bigcup_{k\geq 1}O(\log^k{n})$.
As a motivation,  first note that when a large number of crashes make $|P-F| =
\Theta(\polylog{n})$,
one may attempt a trivial solution where all live processors perform all $t$ tasks.
While this approach has efficient work, it does not guarantee that 
workers compute correct results; in fact, since the overall probability of live
workers producing bogus results can be close to $\frac{1}{2}$, this 
may yield on the average just slightly more than $t/2$ correct results.

For executions in \M{pl}, let $|P-F|$ be at least $a \log^c n$, 
for specific constants $a$ and $c$ satisfying the model constraints.
Let $F_r$ be the actual number of crashes that occur
prior to round $r$. 
For the purpose of analysis we divide an execution
of the algorithm into two epochs: \epoch{b'} and \epoch{c}.
In \epoch{b'} we include all rounds $r$ where $F_r$ 
remains constrained as in model \M{fp}, i.e.,
$|P-F_r| \ge b n^\varepsilon$, for some constants $b>0$ and $\varepsilon\in (0,1)$;
for reference, this epoch combines \epoch{a} and \epoch{b}
from the previous section.
In \epoch{c} we include all rounds $r$ starting with 
the first round $r''$ (it can be round 1) when the number of live
processors drops below $b n^\varepsilon$, 
but  remains $\Omega(\log^c{n})$ per model \M{p\ell}.%
%
Also note that either epoch may be empty.

In \epoch{b'} the algorithm incurs costs exactly as
in model \M{fp}.
Next we consider \epoch{c}.
If  algorithm \alg{} terminates in round $r''$, the first round of the epoch,
the costs remain the same as the costs analyzed for \M{fp}
in the previous section.

If it does not terminate, it incurs additional
costs associated with the processors in $P - F_{r''}$, where 
$a \log^c n \le |P - F_{r''}| \le b n^\varepsilon$.
We analyze the costs for \epoch{c} next. 
The final message and work complexities
are then at most the worst case complexity for \epoch{b'}
plus the additional costs for \epoch{c}.

In the next lemmas we use the fact that $|P - F_{r''}|=\Omega(\log^c{n})$.
The first lemma shows that within some $O(n)$ rounds in \epoch{c}
every task is chosen for execution
 $\Theta(\log{n})$ times by processors in $P-F$ \whp{}.

\begin{lemma} \label{thresholdFb}
In $O(n)$ rounds of \epoch{c} every task is performed
$\Theta(\log{n})$ times \whp{}  by  processors in $P-F$.
\end{lemma}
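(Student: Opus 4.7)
The plan is to mirror the structure of the proof of Lemma~\ref{thresholdFa}, with the principal substitution $|P-F| = \Omega(n^\varepsilon) \rightarrow |P-F| = \Omega(\log^c n)$. Intuitively, in \epoch{fp} the total number of individual iterations required to hit every task $\Theta(\log n)$ times is about $n\log n$; here the per-round work is smaller by a factor $n^\varepsilon/\log^c n$, so we need roughly $n$ rounds (rather than $n^{1-\varepsilon}\log n$) to accumulate the same volume of iterations. As in the earlier lemma, we assume termination does not occur within these $O(n)$ rounds, since an earlier termination only makes the claim easier.

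First I would set $\tilde r = \kappa n$ for a sufficiently large constant $\kappa$, and count the total number $\xi$ of \emph{Compute}-stage iterations performed by processors in $P-F$ over the first $\tilde r$ rounds of \epoch{c}. Since model \M{pl} guarantees $|P-F| \geq a\log^c n$ with $c \geq 1$, we have $\xi \geq \tilde r \cdot a\log^c n = \kappa a n \log^c n$. Then for a fixed task $\tau$, define Bernoulli variables $X_1,\dots,X_\xi$ with $X_x = 1$ iff $\tau$ is selected in iteration $x$; by line~\algoref{algo:choose_task} of algorithm \alg{}, each $X_x$ is independent with $\Pr[X_x=1] = 1/n$. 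Setting $X = \sum_x X_x$, linearity gives $\mu = \mathbb{E}[X] \geq \kappa a \log^c n$.

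Next I would apply the lower-tail Chernoff bound (Lemma~\ref{chernoff}) with a fixed $\delta \in (0,1)$ to obtain
\[
\Pr[X \leq (1-\delta)\mu] \leq e^{-\mu\delta^2/2} \leq e^{-(\kappa a \delta^2/2)\log^c n} \leq \frac{1}{n^{\alpha}},
\]
for some $\alpha > 1$ once $\kappa$ is chosen large enough; this is where the assumption $c \geq 1$ is used, since for $c=1$ we already get a polynomial tail in $n$, and for $c>1$ the bound is even stronger. Let $k_2 = (1-\delta)\kappa a$, so that task $\tau$ is performed at least $k_2 \log n = \Theta(\log n)$ times with probability at least $1 - 1/n^\alpha$. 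A Boole's-inequality union bound over all $n$ tasks then yields the failure probability at most $1/n^{\alpha-1}$, which is $1/n^\beta$ for some $\beta > 0$, proving the claim \whp{}.

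The main obstacle is essentially bookkeeping: one must be careful that the bound $|P-F| \geq a\log^c n$ applies throughout the $\tilde r = O(n)$ rounds under consideration (which is guaranteed by the definition of \epoch{c} and model \M{pl}), and that the constant $c \geq 1$ is used precisely to make the Chernoff tail polynomially small in $n$ so that the union bound over the $n$ tasks still leaves a \whp{} statement. No genuinely new idea beyond the \M{fp} proof is required.
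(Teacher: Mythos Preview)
Your proposal is correct and follows essentially the same approach as the paper's proof: set $\tilde r = \Theta(n)$, count the total number of individual \emph{Compute} iterations by processors in $P-F$ (at least $\tilde r \cdot a\log^c n$), model the selections of a fixed task $\tau$ as independent Bernoulli trials with success probability $1/n$, apply the lower-tail Chernoff bound to get a polynomially small failure probability (using $c\ge 1$), and then take a union bound over all $n$ tasks. One cosmetic point: the paper names the resulting constant $k_3$ here, since $k_2$ was already introduced in the proof of Lemma~\ref{thresholdFa}; you should rename your constant accordingly so that the final definition $\const K = \max\{k_1,k_2,k_3\}$ remains consistent.
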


\begin{proof}
{If the algorithm terminates within 
$O(n)$
rounds of \epoch{c}, then each task is performed
$\Theta(\log n)$ times as reasoned earlier.
Suppose the algorithm does not terminate
(its performance is worse in this case).}
Let us assume that after $\tilde{r}$ rounds of algorithm \alg{}, where
$\tilde{r} = \tilde{\kappa} n$  ($\tilde{\kappa}$ is a sufficiently large constant),
there exists a task $\tau$ that is performed less than $(1-\delta)\tilde{\kappa} \log{n}$ times 
by the processors in $P-F$, for some $\delta > 0$.  
We prove that \whp{} such a task does not exist.

We define $k_3$ to be such that $k_3 = (1-\delta) \tilde{\kappa}$
(the constant $k_3$ will play a role in establishing the value of the compile-time
  constant $\const{K}$ of algorithm \alg{}; we come
  back to this at the end of Section~\ref{analysis}).
According to the above assumption,
 at the end of round $\tilde{r}$ for some task $\tau$, we have 
$|\cup_{j=1}^{n} R_j[\tau]| < (1-\delta) \tilde{\kappa} \log n = k_3 \log{n}$.

Let us consider all algorithm iterations individually performed by each processor in $P-F$
during the $\tilde{r}$ rounds.
Let $\xi$ be the total number of such individual iterations. 
Then $\xi \ge \tilde{r} |P-F| \geq \tilde{r}a \log^c{n}$. 
During any such iteration, a processor from $P-F$ selects and performs 
task $\tau$ in  line~\algoref{algo:choose_task} independently
with probability $\frac{1}{n}$.
Let us arbitrarily enumerate said iterations from $1$ to $\xi$.
Let $X_1, \ldots, X_x, \dots, X_\xi$ be Bernoulli random variables, such that
$X_x$ is $1$ if task  $\tau$ is performed in iteration $x$, and $0$ otherwise.
We define $X \equiv  \sum_{x=1}^{\xi}X_x$, the random variable that describes
the total number of times task $\tau$ is performed during the 
$\tilde{r}$ rounds by processors in $P-F$. 
We define $\mu$ to be ${\mathbb E}[X]$.
Since ${\bf Pr}[X_x=1] = \frac{1}{n}$, for $x\in \{1, \ldots, \xi\}$,
where $\xi \geq \tilde{r}a \log^c{n}$,
by linearity of expectation, we obtain
  $\mu = {\mathbb E}[X] = \sum_{x=1}^{\xi} \frac{1}{n} 
 \geq \tilde{\kappa} a\log^c{n}> k_3 \log{n}$.
Now by applying Chernoff bound  for the same $\delta>0$ as chosen
above, we have: 

\begin{displaymath}
{\bf Pr}[ X \leq (1-\delta)\mu] \leq e^{ -\frac{\mu\delta^2}{2}} 
\leq e^{-\frac{(\tilde{\kappa} a\log^c{n})\delta^2}{2}} \leq \frac{1}{n^{\frac{k\log^{c-1}n\delta^2}{2}}} \leq \frac{1}{n^{\alpha}}
\end{displaymath}
where $\alpha >1$ for some sufficiently large $k$.
Now let  $\mathcal{E}_\tau$ denote the probability event  
 $|\cup_{i=1}^{n} R_i(\tau)| > k_3 \log{n}$  by the round $\tilde{r} $ of the algorithm, and 
we let $\bar{\mathcal{E}_\tau}$ be the complement of that event.  
Next, by Boole's inequality we have 
${\bf Pr}[ \cup_\tau \bar{\mathcal{E}}_\tau ] \leq  \sum_\tau{\bf Pr}[\bar{\mathcal{E}_\tau} ] \leq \frac{1}{n^{\beta}}$,
where $\beta =\alpha - 1 > 0$. Hence each task  is performed at least
$\Theta(\log{n})$  times \whp{}, i.e., 
${\bf Pr}[ \cap_\tau \mathcal{E}_\tau ] = 
{\bf Pr}[ \overline{\cup_\tau\bar{\mathcal{E}_\tau}} ] \geq  1 -
\frac{1}{n^{\beta}}$. 
\end{proof}

Next we show that once each task is done a logarithmic number of times
by processors in $P-F$,
then at least one worker in $P-F$ acquires a sufficient collection of triples in
at most a linear number of rounds to become enlightened.
{We note that
if the algorithm terminates within 
$O(n)$ rounds of \epoch{c}, then every processor in $P-F$ is enlightened as
reasoned earlier. Suppose the algorithm does not terminate
(leading to its worst case performance).}

\begin{lemma} \label{lem:dissFb}
Once every task is performed $\Theta(\log{n})$ times by processors in $P-F$
then at least one worker in $P-F$ becomes \ENed{}
\whp{} after  $O(n)$ rounds in \epoch{c}. 
\end{lemma}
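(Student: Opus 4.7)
The plan is to follow the two-phase dissemination strategy used in Lemmas~\ref{lem:dissFa-p1} and~\ref{lem:dissFa-p2}, but retune the parameters to the much smaller live set $|P-F| \ge a\log^c n$ characteristic of \M{pl}. Fix any triple $z$ generated by a processor in $P-F$; by Lemma~\ref{thresholdFb} there are $\Theta(n\log n)$ such triples to consider. Let $S(r)\subseteq P-F$ be the processors in $P-F$ that know $z$ by round $r$, and set $s(r)=|S(r)|$. The target is to show that, \whp{}, every processor in $P-F$ learns $z$ within $O(n)$ rounds of \epoch{c}; a Boole union bound over all $\Theta(n\log n)$ existing triples then forces every processor in $P-F$ to accumulate $\Theta(\log n)$ results for every task and hence to satisfy the enlightenment test of line~\algoref{algo:enled}, so in particular at least one processor becomes \ENed{}.

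In the first phase I would show that $s(r)$ grows from $1$ to $(3/5)|P-F|$ in $O((n/|P-F|)\log\log n)=o(n)$ rounds. The argument is structurally identical to Lemma~\ref{lem:dissFa-p1}: partition the rounds into blocks of length $\Theta(n/|P-F|)$, introduce the auxiliary algorithm \alg{'} (which suppresses re-forwarding of $z$ within the block in which a processor first learns it) to obtain negative dependencies across processors in a block, and apply a Chernoff bound to the number of elements of $(P-F)-S'(r)$ contacted during a block. The block length is chosen so that the per-block expected multiplicative increase in $s(r)$ is $1+\Omega(1)$, after which $O(\log\log n)$ doublings suffice to reach the $(3/5)|P-F|$ threshold. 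Because $|P-F|=\Omega(\polylog{n})$, the Chernoff exponent per block is super-logarithmic in $n$, giving each block a success probability of at least $1-1/n^{\alpha}$ with $\alpha>1$.

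In the second phase, once $s(r)\ge (3/5)|P-F|$, I would apply the coupon-collector argument of Lemma~\ref{lem:dissFa-p2}: treat $S(r)$ as the collective coupon collector and the remaining processors in $(P-F)-S(r)$ as the distinct coupons, noting that every send from $S(r)$ hits any given live target with probability $1/n$ and that each round contributes at least $(3/5)|P-F|$ independent draws. A standard CCP-style bound then shows that $K=\Theta(n\log n/|P-F|)=O(n/\log^{c-1}n)=O(n)$ additional rounds suffice, \whp{}, for the remaining $(2/5)|P-F|$ processors to receive $z$. Combining the two phases yields $O(n)$ rounds for a single triple to reach all of $P-F$, and the final union bound over the $\Theta(n\log n)$ triples is absorbed because each triple's dissemination fails with probability at most $1/n^{\beta}$ for some $\beta>1$.

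The main obstacle will be the first phase: when $|P-F|$ is only polylogarithmic, the expected per-round growth in $s(r)$ can be as small as $|P-F|/n$, so the per-block increment is fragile and the Chernoff application is tight. The slack parameter $\delta$ must be chosen large enough that each block's failure probability is $1/n^{\Omega(1)}$ yet small enough that the expected block gain is still $\Theta(s(r))$; the hypothesis $c\ge 1$ is exactly what makes the resulting Chernoff exponent super-logarithmic in $n$ and keeps the total round count at $O(n)$, so neither phase alone nor the final union bound blows past the target.
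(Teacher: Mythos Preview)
Your overall architecture (fix a triple, show it reaches all of $P-F$ \whp{}, then union-bound over the $\Theta(n\log n)$ triples) matches the paper, but the dissemination argument itself is different and, as written, has a real gap.

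The paper does \emph{not} reuse the block/Chernoff machinery of Lemmas~\ref{lem:dissFa-p1}--\ref{lem:dissFa-p2}. Instead it reduces to an ``amnesiac'' process in which $|Z(r)|$ grows by at most one per round, writes the total spreading time as $T=\sum_{\lambda=1}^{\Lambda(n)-1}T_\lambda$ where each $T_\lambda$ is an independent geometric random variable with success probability about $\lambda(\Lambda(n)-\lambda)/n$, and then bounds $\Pr[T>\eta]$ via an exponential-moment (MGF) Markov argument with $e^\kappa=1+\Lambda(n)/(2n)$. This yields $\Pr[T>kn]\le 2^{\Lambda(n)}/e^{k\Lambda(n)/2}\le n^{-\alpha}$ uniformly for all $\lambda$, including the delicate early stages $\lambda=1,2,\dots$.

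Your Phase~1 breaks precisely there. The per-block mean you would feed into Chernoff is $\mu\approx s'(r)\bigl(1-\tfrac{s'(r)}{|P-F|}\bigr)$, which depends on $s'(r)$, not on $|P-F|$. When $s'(r)=1$ the mean is $\Theta(1)$ and the Chernoff tail $e^{-\mu\delta^2/2}$ is only a constant, so the claimed ``success probability $\ge 1-1/n^\alpha$'' per block is false. In the \M{fp} analysis this was patched by Lemma~\ref{lem:dis:log}, which first grows $s(r)$ to $\Theta(\log^3 n)$ before invoking Chernoff; you omit that bootstrap. Worse, in \M{pl} with $c=1$ one may have $|P-F|=\Theta(\log n)$, so the regime ``$s(r)=\Omega(\log n)$ yet $s(r)\le\tfrac{3}{5}|P-F|$'' is essentially empty, and a per-step or per-block union bound on the geometric stages accumulates to $\Theta(n\log\log n)$ rounds rather than $O(n)$. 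The MGF technique in the paper is exactly what buys the missing concentration across the small-$s(r)$ stages; without it your Phase~1 does not close at the stated $O(n)$ budget for the full range $c\ge 1$.
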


\begin{proof}
Assume that after $r$ rounds of algorithm \alg{}, every task $j \in \cal{T}$
is done $\Theta(\log{n})$ times by processors in $P-F$, and let $\mathcal{V}$ be the set 
of corresponding triples in the system.
Consider a triple $z \in \cal{V}$ that was generated in some round $\tilde{r}$.
 We want to prove that \whp{} it  takes $O(n)$ rounds for
the rest of the processors in $P-F$ to learn about $z$.

Let $\Lambda(n)$ be the number of processors in $P-F$, then
$|P-F| = \Lambda(n) \geq a \log^c{n}$, by the constraint of  model \M{pl}.
While there may be more than $\Lambda(n)$ processors
that start \epoch{c}, we focus only on processors in $P-F$. 
This is sufficient for our purpose of establishing
an upper bound on the number of rounds of at least one worker becoming
enlightened:
in line~\algoref{alg:share} of algorithm \alg{} every live
processor chooses a destination for a {\sf share} message uniformly
at random, and hence having more processors will only cause
a processor in $P-F$ becoming enlightened quicker.

Let $Z(r) \subseteq {P-F}$ be the set of processors that becomes aware of 
triple $z$, in round $r$.
Beginning with round $\tilde{r}$ when the triple is generated,  
we have $|Z(\tilde{r})| \geq 1$ (at least one processor is aware of the triple).
For any  rounds $r'$ and $r''$, where $\tilde{r}\le r' \le r''$,
we have $Z(\tilde{r})\subseteq Z(r') \subseteq Z(r'') \subseteq P-F$
because the considered processors that become aware of $z$ do not crash;
thus  $|Z(r)|$ is monotonically non-decreasing  with respect to $r$.

We want to estimate an upper bound on the total number of rounds $r$ required for
$|Z(r)|$ to become $\Lambda(n)$. 
We will do this by constructing a sequence of random mutually independent variables,
each corresponding to a contiguous segment of 
rounds $r_1, ... ,r_k$, for $k \geq 1$ in an execution of the algorithm.
Let $r_0$ be the round that precedes round $r_1$.
Our contiguous segment of rounds has the following properties:
(a) $|Z(r_x)|=|Z(r_0)|$ for $1\le x <k$, where during
such rounds $r_x$ the set $Z(r_x)$ does not grow
(the set of such rounds may be empty), and
(b) $|Z(r_k)| > |Z(r_0)|$, i.e., the size of the set grows.

For the purposes of analysis, we assume
that $|Z(r_k)| = |Z(r_0)|+1$, i.e., the set grows
by exactly one processor. Of course it is possible
that this set grows by more than a unity in one round.
Thus we consider an `amnesiac' version of the algorithm
where if more than one processor learns about the triple,
then all but one processor `forget' about that triple.
The information is propagated slower in the amnesiac
algorithm, but this is sufficient for us to establish
the needed upper bound on the number of rounds needed
to propagate the triple in question.

Consider some round $r$ with $|Z(r)| = \lambda$.
We define random variable $T_\lambda$  that represents the
 number of rounds required for $|Z(r+T_\lambda)| =\lambda+1$, i.e.,
$T_\lambda$ corresponds to the number $k$ of rounds in the contiguous segment
of rounds we defined above.
The random variables $T_\lambda$ are geometric, independent random
variables. Hence, we acquire a
sequence of random variables 
$T_1, ... , T_{\Lambda(n) -1}$, 
since $|P-F| = \Lambda(n)$ and according to our amnesiac algorithm
$|Z(r)| \le |Z(r+1)| +1$ for any round $r\ge\tilde{r}$.

Let us define the random variable  $T$ as
$T \equiv \sum_{\lambda=1}^{\Lambda(n)-1}T_\lambda$.
$T$ is  the total number of rounds required for all processors 
in $P-F$ to learn about triple~$z$:
By Markov's inequality we have:
\begin{displaymath}
{\bf Pr}(\kappa T > \kappa \eta) = {\bf Pr}(e^{\kappa T}>e^{k\eta})
 \leq \frac{\mathbb{E}[e^{\kappa T}]}{e^{\kappa \eta}} \; ,
\end{displaymath}
for some $\kappa>0$ and $\eta > 0$ to be specified later in the proof.

We say that ``a transmission in round $r > \tilde{r}$ successful" if  processor $j \in Z(r)$ sends a
message to some processor $l \in (P-F) - Z(r)$; otherwise we say that
``the transmission is unsuccessful."
Let $p_j$ be the probability that the transmission is successful in a round, and
$q_j=1-p_j$ be the probability that it is unsuccessful.  Note that if
a transmission is unsuccessful 
then this means that in that round none of the processors in $Z(r)$,
where $|Z(r)|=\lambda$, were able to contact a processor in $(P-F) -
Z(r)$ (here $|(P-F) - Z(r)|=\Lambda(n) - \lambda$), and hence we have:
\begin{displaymath}
q_j = (1-\frac{\Lambda(n) - \lambda}{n})^\lambda
\end{displaymath}
By geometric distribution, we have the following:
\begin{eqnarray*}
\mathbb{E}[e^{\kappa T_\lambda}] = p_je^\kappa + p_je^{2\kappa}q_j + p_j e^{3\kappa} {q_j}^2 + ...  =
p_je^\kappa(1+q_je^\kappa+{q_j}^2e^{2\kappa} + ...)
\end{eqnarray*}

{In order to sum the infinite geometric
  series, we need to have $q_je^\kappa < 1$.} Assume that $q_je^\kappa
< 1$ (note that we will need to choose $\kappa$ such 
that the inequality is satisfied), hence using infinite geometric
series we have:
\begin{eqnarray*}
\mathbb{E}[e^{\kappa T_\lambda}] = \frac{p_je^\kappa}{1-q_je^\kappa}
\end{eqnarray*}
In the remainder of the proof we focus on deriving a tight bound on the 
$\mathbb{E}[e^{\kappa T_\lambda}]$, and subsequently apply Boole's inequality across
all triples in ${\mathcal V}$

\begin{eqnarray*}
{\bf Pr}[\kappa T \geq \kappa \eta] &\leq& \frac{\mathbb{E}[e^{\kappa T}]}{e^{\kappa \eta}} =
\frac{\mathbb{E}[e^{\kappa \sum_{\lambda=1}^{\Lambda(n)-1}T_\lambda}]}{e^{\kappa \eta}} =
\frac{\prod_{\lambda=1}^{\Lambda(n)-1} \mathbb{E}[e^{\kappa T_\lambda}]}{e^{\kappa \eta}} \\
&=&\frac{1}{e^{\kappa \eta}} \prod_{\lambda=1}^{\Lambda(n)-1}
\frac{(1-(1-\frac{\Lambda(n)-\lambda}{n})^\lambda)e^\kappa}{1-(1-\frac{\Lambda(n)-\lambda}{n})^\lambda
  e^\kappa}
\\
&\leq& \frac{1}{e^{\kappa \eta}} \prod_{\lambda=1}^{\Lambda(n)-1}
\frac{(1-1+\frac{\lambda}{n}(\Lambda(n)-\lambda))e^\kappa}{1-(1-\frac{\Lambda(n)-\lambda}{n})^\lambda
  e^\kappa}
\\
&=& \frac{1}{e^{\kappa \eta}} \prod_{\lambda=1}^{\Lambda(n)-1}
\frac{\lambda
  (\Lambda(n)-\lambda)e^\kappa}{n(1-(1-\frac{\Lambda(n)-\lambda}{n})^\lambda
  e^\kappa)}
\end{eqnarray*}
Remember that we assumed that $e^\kappa q_j=e^\kappa(1-\frac{\Lambda(n)-\lambda}{n})^\lambda < 1$, 
for $j \in Z(r)$, $\lambda=1,2,\cdots, \Lambda(n)-1$ and $\kappa>0$. Let $\kappa$
be such that $e^\kappa = 1+\frac{\Lambda(n) }{2n}$, then we have the
following
\begin{eqnarray*}
e^\kappa(1-\frac{\Lambda(n) -\lambda}{n})^\lambda &=&
(1+\frac{\Lambda(n)}{2n})(1-\frac{\Lambda(n) -\lambda}{n})^\lambda \\
&=& 1+\frac{\Lambda(n)}{2n} - \frac{(\Lambda(n) -\lambda)\lambda}{n} -
O(\frac{1}{n^2}) \\
&=& 1 - \frac{\lambda (\Lambda(n) -\lambda) - \frac{1}{2}\Lambda(n)}{n} - O(\frac{1}{n^2}) 
\end{eqnarray*}

In order to show that $e^\kappa(1-\frac{\Lambda(n) -\lambda}{n})^\lambda <1$ it
remains to show that $\lambda (\Lambda(n) -\lambda) - \frac{1}{2}\Lambda(n)$
is positive. Note that $\lambda (\Lambda(n) -\lambda)$ is increasing until $\lambda
\leq \frac{\Lambda(n)}{2}$, we should also note that we consider
cases for  $\lambda=1,2,\cdots,\Lambda(n)-1$. Hence, the minimal value of
$\lambda(\Lambda(n) -\lambda)$ will be when either $\lambda=1$, or
$\lambda=\Lambda(n) -1$ and in both cases $\lambda (\Lambda(n) -\lambda) -
\frac{1}{2}\Lambda(n) \geq 0$, for sufficiently large $n$. 

Let us now evaluate the following expression:

$\displaystyle
n(1-e^\kappa(1-\frac{\Lambda(n)-\lambda}{n})^\lambda) 
$
\begin{eqnarray*}
&=&
n(1-(1-\frac{\lambda(\Lambda(n)-\lambda) - \frac{1}{2}\Lambda(n)}{n} -
O(\frac{1}{n^2}))) \\
&=& n(\frac{\lambda (\Lambda(n)-\lambda) - \frac{1}{2}\Lambda(n)}{n} +
O(\frac{1}{n^2})) \\
&=& \lambda (\Lambda(n)-\lambda) - \frac{1}{2}\Lambda(n) + O(\frac{1}{n})
\end{eqnarray*}

Then, we have

$\displaystyle
\prod_{\lambda=1}^{\Lambda(n)-1}
\frac{\lambda (\Lambda(n)-\lambda)}{n(1-e^\kappa(1-\frac{\Lambda(n)-\lambda}{n})^\lambda)}
$
\begin{eqnarray*}
&\leq&
\prod_{\lambda=1}^{\Lambda(n)-1}\frac{\lambda (\Lambda(n)-\lambda)}{\lambda (\Lambda(n)-\lambda)
- \frac{1}{2}\Lambda(n)+O(\frac{1}{n})} \\
&\leq&
\prod_{\lambda=1}^{\Lambda(n)-1}(1-\frac{\Lambda(n)}{2 \lambda (\Lambda(n)-\lambda)})^{-1} \\
&\leq& \prod_{\lambda=1}^{\Lambda(n)-1} \left(\frac{1}{2}\right)^{-1} \leq 2^{\Lambda(n)}
\end{eqnarray*}
The latest is true because $\lambda (\Lambda(n)-\lambda)$ achieves its minimal
value when $\lambda=1$.
Now, since $\Lambda(n) \geq a \log^c{n}$  we have:
\begin{eqnarray*}
{\bf Pr}(\kappa T > \kappa \eta) \leq \frac{e^{\kappa(a \log^c{n}-1)}}{e^{\kappa \eta}}
2^{a \log^c{n}}
\end{eqnarray*}

Since $e^\kappa = 1+ \frac{a \log^c{n}}{2n}$ then by taking natural base
logarithm of both sides and using Taylor series for $\ln(1+x)$,
where $|x|<1$, we have $\kappa \leq \frac{a \log^c{n}}{2n}$. And hence,
$e^{\kappa(a \log^c{n}-1)}=O(1)$. 
And we get,
\begin{eqnarray*}
{\bf Pr}(\kappa T > \kappa \eta) \leq \frac{2^{a \log^c{n}}}{e^{\kappa \eta}}
\end{eqnarray*}
By taking $\eta = kn$, where $k>2$ is a sufficiently large constant, we get
\begin{eqnarray*}
{\bf Pr}(\kappa T > \kappa \eta) \leq
\frac{2^{a \log^c{n}}}{e^{\frac{a \log^c{n} kn}{2n}}} \leq \frac{1}{n^{\alpha}}
\end{eqnarray*}
where $\alpha > 1$ for some sufficiently large constant $k>2$.

Thus we showed that if a new triple is generated by a worker in $P-F$ then \whp{} it is
known to all processors in $P-F$ in $O(n)$ rounds.
Now by applying Boole's inequality we want to show that \whp{} in
$O(n)$ rounds all triples in $\cal{V}$ become known to all processors in $P-F$.

Let $\overline{\mathcal{E}}_z$ be the event that some triple 
$z \in \cal{V}$ is not spread around among all workers in
$P-F$. In the preceding part of the proof  we have 
shown that ${\bf Pr}[\overline{\mathcal{E}}_z] <
\frac{1}{n^\alpha}$, where $\alpha>1$. By Boole's inequality, the
probability that there exists one triple that did not get spread to
all workers in $P-F$, can be bounded as  
\begin{displaymath}
{\bf Pr} [\cup_{z \in {\mathcal
    V}}\overline{\mathcal{E}}_z ] \leq \Sigma_{z \in
  {\mathcal V}}{\bf Pr}[\overline{\mathcal{E}}_z] = 
\Theta(n \log^c{n})\frac{1}{n^\alpha} \leq \frac{1}{n^\beta}
\end{displaymath}
where $\beta>0$. This implies that 
every  worker in $P-F$ collects all $\Theta(n \log n)$
triples generated by processors in $P-F$ \whp{}. Thus, at least one
worker in $P-F$ becomes \ENed{} after $O(n)$ rounds. 
\end{proof}

The following theorem shows that, with high probability, during \epoch{c} the 
correct results for all $n$ tasks, are  available at all live processors in $O(n)$ rounds.

\begin{theorem}\label{thm:correct:resFb}
Algorithm \alg{} makes known the correct results of all $n$ tasks
at every live processor in \epoch{c}
after $O(n)$ rounds \whp{}.
\end{theorem}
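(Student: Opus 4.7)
The plan is to mirror the structure of Theorem~\ref{thm:correct:resFa} but with the round bound and live-processor bound appropriate to model \M{pl} and \epoch{c}. First I would invoke Lemma~\ref{thresholdFb} to guarantee that within $O(n)$ rounds of \epoch{c} every task $\tau$ has been performed $\Theta(\log n)$ times by processors in $P-F$ \whp{}, and then apply Lemma~\ref{lem:dissFb} to conclude that within a further $O(n)$ rounds at least one processor in $P-F$ becomes \ENed{}; by Lemma~\ref{lem:halt} every live processor then halts $O(\log n)$ rounds later, so the total round count remains $O(n)$. The remaining work is to show that the plurality that each enlightened processor computes via line~\algoref{algo:majority} is actually the correct result, \whp{}, for every task simultaneously.

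The correctness argument proceeds as in the proof of Theorem~\ref{thm:correct:resFa}. Fix a task $\tau$ and let $L'$ be the first round of \epoch{c} at which some processor in $P-F$ reaches the enlightenment condition (line~\algoref{algo:enled}). Since there are $n$ tasks, the constant $\const{K}$ requires $\const{K} \log n$ results for every task, and in \epoch{c} at most $|P-F_r|$ processors produce results per round where $|P-F_r| \ge a\log^c n$, we have the lower bound $L' \ge \frac{\const{K} n \log n}{|P-F_r|}$, which is $\Omega(n/\log^{c-1} n)$ and in particular $\Omega(n)$-scale. I would define the Bernoulli variables $X_{ir}$ indicating correct computation of $\tau$ by worker $i$ in round $r$, with ${\bf Pr}[X_{ir}=1] = q_i/n$, and $Y_{ir}$ indicating that $i$ chose $\tau$ in round $r$, with ${\bf Pr}[Y_{ir}=1] = 1/n$. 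Set $X = \sum_{r\le L'}\sum_{i\in P-F_r} X_{ir}$ and $Y = \sum_{r\le L}\sum_{i\in P-F_r} Y_{ir}$, where $L$ is the final round before termination; as in Theorem~\ref{thm:correct:resFa}, $L - L' = O(\log n)$ by Lemma~\ref{lem:halt}, so the contribution of rounds after $L'$ is negligible and $\mathbb{E}[Y]/\mathbb{E}[X]$ is dominated by the corresponding quantity computed at round $L'$.

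Using the model constraint $\frac{1}{|P-F_r|}\sum_{i\in P-F_r} q_i > \tfrac12 + \zeta'$ and the independence of the $X_{ir}$ (oblivious adversary, independent rounds), I apply Chernoff to both $X$ and $Y$ with a fixed $\delta>0$ chosen so that $(1-\delta)\mathbb{E}[X] > (1+\delta) L'/2 \ge (1+\delta)\mathbb{E}[Y]/2$. The key point is that $\mathbb{E}[X] = \Theta(L' \cdot |P-F_r|/n)$ is at least $\Omega(\log^c n)$, which is more than $\omega(\log n)$ and hence still gives a Chernoff tail of the form $\exp(-\Theta(\log^c n)) \le n^{-\alpha}$ for any $\alpha$ of interest. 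Thus $\Pr[Y/2 < X] \ge 1 - n^{-\beta}$ for some $\beta>1$, i.e., the correct answer for $\tau$ forms the plurality \whp{}. Applying Boole's inequality over the $n$ tasks preserves high-probability, and combining with Lemmas~\ref{thresholdFb},~\ref{lem:dissFb}, and~\ref{lem:halt} yields that after $O(n)$ rounds of \epoch{c}, every live processor has computed and stored the correct result for every task \whp{}.

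The main obstacle I expect is the Chernoff-concentration step: because $|P-F_r|$ can be as small as $\Theta(\log^c n)$, the expected counts $\mathbb{E}[X]$ and $\mathbb{E}[Y]$ are much smaller (by a factor of roughly $n^{1-\varepsilon}/\log^c n$) than in the \M{fp} case, and one has to verify that $\const{K}$ can be chosen large enough so that $L' \cdot |P-F_r|/n = \Omega(\log^c n)$ still yields the $n^{-\alpha}$ tail needed to survive the Boole union bound over all $n$ tasks; since $c \ge 1$ and $\const{K}$ is a free compile-time constant, this forces $\const{K}$ in model \M{pl} to be chosen appropriately (absorbing the $k_3$ from the proof of Lemma~\ref{thresholdFb}), and this is the place where the value of $\const{K}$ used in the algorithm is ultimately fixed.
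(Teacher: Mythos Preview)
Your proposal is correct and follows essentially the same route as the paper: the paper gives only a short proof sketch that says the argument is ``similar to the proof of Theorem~\ref{thm:correct:resFa}'' and then invokes Lemmas~\ref{thresholdFb},~\ref{lem:halt}, and~\ref{lem:dissFb}, which is exactly what you do. Your write-up is in fact more detailed than the paper's sketch, since you spell out the $X_{ir}$/$Y_{ir}$ Chernoff step and explicitly identify the role of $\const{K}$ (via $k_3$) in making the concentration go through when $|P-F_r|$ is only polylogarithmic.
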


\begin{proofsketch}
The proof of this theorem is similar to the proof of
Theorem~\ref{thm:correct:resFa}. This is because, by
Lemma~\ref{thresholdFb}, in $O(n)$ rounds
the processors in $P-F$ generate
$\Theta(\log^c{n})$ triples, where $c \geq 1$ is a constant.
According to
Lemmas~\ref{lem:halt} and~\ref{lem:dissFb} in $O(n)$ rounds 
every live worker  becomes \ENed{}.
\end{proofsketch}

According to Lemma~\ref{lem:dissFb},
after $O(n)$ rounds of \epoch{c}
at least one processor in $P-F$ becomes \ENed{}. Furthermore, once a 
processor in $P-F$ becomes \ENed{}, according to Lemma~\ref{lem:halt}
after additional $O(\log{n})$ rounds  every live processor 
becomes \ENed{} and then terminates, \whp{}. 
Next we assess work and message complexities
(using the approach in the proof of Theorem~\ref{complFa}).
Recall  that we may choose arbitrary $\varepsilon$,
such that $0<\varepsilon < 1$.

\begin{theorem} \label{complFc}
Under adversarial model \M{pl} algorithm \alg{}  has work complexity and message complexity
$O(n^{1+\varepsilon})$, for any $0<\varepsilon < 1$. 
\end{theorem}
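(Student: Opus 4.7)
The plan is to combine the worst-case costs of \epoch{b'} with those of \epoch{c}, paralleling the accounting in Theorem~\ref{complFa}. Because \epoch{b'} fuses \epoch{a} and \epoch{b} from the previous section, the earlier analysis already gives work and message complexity $O(n\log n\log\log n)$ for \epoch{b'}; this is $o(n^{1+\varepsilon})$ for any $\varepsilon>0$ and will be absorbed into the cost of \epoch{c}, so the burden reduces to bounding \epoch{c}.

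For the round complexity in \epoch{c} I appeal to Lemma~\ref{lem:dissFb}, which gives $O(n)$ rounds until some processor in $P-F$ becomes \ENed{} \whp{}, followed by Lemma~\ref{lem:halt}, which adds $O(\log n)$ rounds until every live processor halts. Hence \epoch{c} spans $O(n)$ rounds \whp{}. Since at most $b\,n^\varepsilon$ processors are live per round and each performs constant local work, the work contributed by \epoch{c} is $O(n^\varepsilon \cdot n) = O(n^{1+\varepsilon})$. For messages, each worker sends at most one {\sf share} message per round, contributing $O(n^\varepsilon \cdot n) = O(n^{1+\varepsilon})$ such messages; Lemma~\ref{lem:termin} then caps the {\sf profess} traffic that triggers termination at $\Theta(n\log n)$, which is subsumed by the {\sf share} bound.

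The main technical obstacle is the trailing burst of {\sf profess} traffic produced in the single round following the firing of the Lemma~\ref{lem:termin} threshold, because {\sf profess} sets double in size each round. The key observation is that a single \ENed{} processor alone sends $\sum_{i=0}^{\ell-1} 2^i\log n = \Theta(2^\ell\log n)$ {\sf profess} messages over its first $\ell$ active rounds, so the threshold of $\Theta(n\log n)$ is reached by the time $\ell = O(\log n)$; consequently, in the terminating round each of the at most $O(n^\varepsilon)$ \ENed{} processors broadcasts at most $O(n\log n)$ messages, giving a trailing burst of $O(n^{1+\varepsilon}\log n)$. Passing from $\varepsilon$ to any slightly larger $\varepsilon'\in(\varepsilon,1)$ absorbs the $\log n$ factor and preserves the $O(n^{1+\varepsilon'})$ form, and summing the \epoch{b'} and \epoch{c} contributions then yields the claimed work and message complexities.
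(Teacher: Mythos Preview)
Your argument follows the paper's proof closely: combine the \epoch{b'} cost from Theorem~\ref{complFa} with an \epoch{c} cost obtained from the $O(n)$ round bound (Lemmas~\ref{lem:dissFb} and~\ref{lem:halt}) times the $O(n^\varepsilon)$ live-processor bound, and cap the {\sf profess} traffic via Lemma~\ref{lem:termin}. The paper's own proof stops exactly there and simply asserts that the {\sf profess} contribution is $O(n\log n)$, without discussing any overshoot in the final round.

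Your extra paragraph on the ``trailing burst'' is therefore more careful than the paper, but your resolution is unnecessarily loose. Bounding each of the $O(n^\varepsilon)$ \ENed{} processors individually by $O(n\log n)$ messages in the last round ignores the global constraint and forces you into the $\varepsilon\!\to\!\varepsilon'$ trick. A tighter and cleaner observation: if $r$ is the first round with cumulative {\sf profess} total $T_r \ge \Theta(n\log n)$, then $T_{r-1}$ is below the threshold, and since each \ENed{} processor's per-round output doubles, the messages sent in round~$r$ satisfy $M_r \le 2M_{r-1} + (\text{newly \ENed{}})\cdot\log n \le 2T_{r-1} + O(n\log n)$. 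Hence $T_r \le 3T_{r-1} + O(n\log n) = O(n\log n)$, so the total {\sf profess} traffic is $O(n\log n)$ outright, with no need to adjust~$\varepsilon$. With this sharpening your proof matches the paper's and in fact fills a small gap the paper leaves implicit.
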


\begin{proof}
To obtain the result  we combine the costs associated with 
\epoch{b'} with the costs of \epoch{c}.
As reasoned earlier, the worst case costs for \epoch{b'}
are given in Theorem~\ref{complFa}.


{
For \epoch{c} (if it is not empty), 
where $|P-F| = \Omega(\log^cn)$, algorithm \alg{} terminates 
after $O(n)$ rounds \whp{} and 
there are up to $O(n^\varepsilon)$ live processors.
Thus its work is 
$O(n)\cdot O(n^\varepsilon)=O(n^{1+\varepsilon})$. 
In every round, if a processor is a {\em worker} it sends a {\sf share} message to one
randomly chosen processor.
If a processor is \ENed{} then it sends
{\sf profess} messages to a randomly selected subset of processors.
In every round 
$O(n^{\varepsilon})$ {\sf share} messages are sent. Since
\whp{} algorithm \alg{} terminates in $O(n)$ rounds,
$O(n^{1+\varepsilon})$ {\sf share} messages are sent. On the other hand,
according to Lemma~\ref{lem:termin}, if during an execution 
$\Theta(n \log{n})$ {\sf profess} messages are sent then
every processor terminates \whp{}. Hence, the message complexity
is $O(n^{1+\varepsilon})$.

The worst case costs of the algorithm correspond to executions
with non-empty \epoch{c}, where the algorithm does not terminate early. 
In this case the costs from  \epoch{b'}
are asymptotically absorbed into the worst case costs of \epoch{c}
computed above.
}
\end{proof}

Last, we extend our analysis to assess the efficiency of algorithm \alg{_{t,n}} for $t$ tasks,
where $t\ge n$. 
This is done based on the definition of algorithm \alg{_{t,n}}
using the same observations as done in discussing Theorem~\ref{thm:mainFa}.

\begin{theorem}\label{thm:mainFc}
Algorithm \alg{_{t,n}}, with $t\ge n$,
computes the results of $t$ tasks correctly in adversarial model \M{pl} \whp{},
in $O(t)$ rounds, with work complexity $O(t \cdot n^\varepsilon)$ 
and message complexity $O(n^{1+\varepsilon})$, for any $0<\varepsilon < 1$.
\end{theorem}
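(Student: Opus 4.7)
The plan is to mirror the extension argument used in Theorem~\ref{thm:mainFa} for model \M{fp}, now applied to the bounds established for \alg{} in Theorem~\ref{complFc} and Theorem~\ref{thm:correct:resFb} under model \M{pl}. The crucial structural observation, already exploited for \alg{_{t,n}} in Section~\ref{algorithm}, is that \alg{_{t,n}} differs from \alg{} only in that each \emph{Compute} stage processes a chunk of $\Theta(t/n)$ tasks rather than a single task. The \emph{Send} and \emph{Receive} stages, the enlightenment predicate, the ${\sf profess}$ multicast schedule, and the random task-selection pattern (now viewed at the chunk granularity, with $n$ chunk-tasks) are all identical. Consequently, every probabilistic argument from the analysis of \alg{}---dissemination of triples, accumulation of $\const{K}\log n$ results per chunk-task, plurality correctness, and the termination lemma for ${\sf profess}$ messages---carries over verbatim.

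First, I would invoke Theorem~\ref{thm:correct:resFb} at the chunk level to conclude that \whp{} every live processor learns the plurality result for every chunk-task, which in turn yields the correct result for each of the $t$ underlying binary tasks. Here the use of $\zeta$ in the adversarial constraint is important: since chunks are performed atomically by a single worker and an individual worker's error probability is unchanged when it executes a chunk versus a single task, the same Chernoff-based plurality argument applies to chunk outputs without modification.

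Next, I would translate the three complexity measures. For time, Theorem~\ref{complFc} gives $O(n)$ iterations \whp{} for \alg{} under \M{pl}; each iteration of \alg{_{t,n}} is extended by $\Theta(t/n)$ rounds of purely local chunk computation with no communication, yielding round complexity $O(n)\cdot\Theta(t/n)=O(t)$. For work, the number of active (worker) iterations is bounded by the product of iterations and live processors, at most $O(n)\cdot O(n^\varepsilon)$ in the worst case of non-empty \epoch{c}; multiplying by $\Theta(t/n)$ work per iteration yields $O(t\cdot n^\varepsilon)$. For messages, nothing in the communication structure changes when chunks are substituted for single tasks, so the $O(n^{1+\varepsilon})$ bound from Theorem~\ref{complFc} is preserved. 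Finally, the costs of \epoch{b'} are asymptotically absorbed into those of \epoch{c} in the worst case, exactly as argued in Theorems~\ref{complFa} and \ref{complFc}.

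The only step requiring a moment of care, rather than pure restatement, is confirming that no round-to-chunk rescaling contaminates the probabilistic analysis: specifically that the expected-work Chernoff bound used inside Theorem~\ref{thm:correct:resFb} still applies when the $L'$ rounds of \alg{} become $L'\cdot\Theta(t/n)$ rounds of \alg{_{t,n}}, since the counted quantity is the number of chunk-tasks performed (one per iteration), not the number of binary tasks. Once this is noted, the remaining assembly is routine and the three bounds follow directly.
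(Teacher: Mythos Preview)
Your proposal is correct and follows essentially the same approach as the paper: extend the $t=n$ bounds of Theorem~\ref{complFc} and Theorem~\ref{thm:correct:resFb} to \alg{_{t,n}} by multiplying the round and work complexities by the chunk size $\Theta(t/n)$ while leaving the message complexity unchanged, then absorb the \epoch{b'} costs into those of \epoch{c} exactly as in Theorem~\ref{thm:mainFa}. Your writeup is in fact more detailed than the paper's own three-sentence proof, but the argument is the same.
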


\begin{proof}
The result for algorithm \alg{_{t,n}} is obtained (as in Theorem~\ref{thm:mainFa})
by combining the costs from \epoch{b'} (ibid.) with the costs
of \epoch{c} derived
from the analysis of algorithm \alg{} for $t=n$ (Theorem~\ref{complFc}).
This is done by multiplying the number of rounds and work complexities by
the size of the chunk $\Theta(t/n)$; the message complexity is unchanged.
\end{proof}

We note that it should be possible to derive tighter bounds
on the complexity of the algorithm. 
This is because we only assume that for all rounds in \epoch{c} the number of live processors
is bounded by the generous range
$a \log^c n \le |P - F_{r}| \le b n^\varepsilon$. 
In particular, if in all rounds of  \epoch{c} there are $\Theta(\polylog{n})$ live processors,
the round and message complexities  
both become $O(t \, \polylog{n})$
as follows from the arguments along the lines of the proofs of Theorems~\ref{complFc} and~\ref{thm:mainFc}.

\subsection{Finalizing Algorithm Parameterization}

Lastly, we discuss the compile-time constants $\const{H}$ and $\const{K}$ 
that appear in algorithm \alg{} (starting with line~\algoref{alg:const}).
Recall that we have already given the constant $\const{H}$ 
in Section~\ref{sec:lemmas}; the constant
stems from the proof of Lemma~\ref{lem:termin}.

We compute $\const{K}$ as $ \max\{k_1,k_2,k_3\}$, where $k_2$
  and $k_3$ come from the proofs of Lemmas~\ref{thresholdFa}
  and~\ref{thresholdFb}. 
The constant $k_1$, as we detail below, emerges from the proof  
of Lemma~2 of \cite{DKS2011} 
in the same way that the constants $k_2$ and $k_3$
are established in Lemmas~\ref{thresholdFa}
  and~\ref{thresholdFb}.

As we discussed in conjunction with Lemma~\ref{lem:epocha},
algorithm \alg{} in \epoch{a} performs tasks in the same pattern as in
algorithm $A$ \cite{DKS2011} when $\Omega(n)$ processors
do not crash. 
Lemma~2 of \cite{DKS2011} shows that after $\Theta(\log n)$
rounds of algorithm~$A$ there is no task that is performed
less than $k(1-\delta)\log n$ times, \whp{}, for 
a suitably large constant $k$ and some constant $\delta \in (0,1)$.
Thus, we let $k_1$ to be $k_1 = k(1-\delta)$.
This allows us to define $\const{K}$ to be $\const{K} = 
\max\{k_1,k_2,k_3\}$,
ensuring that the constant $\const{K}$ in algorithm \alg{} 
(and thus in algorithm \alg{_{t,n}}) is large enough
to satisfy all requirements of the analysis.

\section{Conclusion} \label{conclusion}

We presented a synchronous decentralized  algorithm
that can perform a set of tasks using a distributed system of 
undependable, crash-prone processors. Our
randomized algorithm allows the processors to compute the 
correct results and make the results available at every
live participating processor, \whp{}. We provided  time, message, and
work complexity bounds for two adversarial strategies, viz., (a) all but
$\Omega(n^\varepsilon)$, $\varepsilon > 0$, processors can crash,
and (b) all but a poly-logarithmic number of  processors can crash.  
{In this work our focus was on stronger adversarial behaviors,
while still assuming synchrony and reliable communication.}
Future work considers the problem in synchronous and
asynchronous decentralized systems, with more virulent adversarial
settings in both. 
We plan to derive strong lower bounds on the message,
time, and work complexities in various models.
{Our algorithm solves the problem (\whp{})
even if only one processor remains operational.
Thus it is worthwhile to understand its behavior
in light of failure dynamics during executions.
Accordingly we plan to derive complexity bounds
that depend on the number of processors and tasks,
and also on the actual number of crashes.

\bibliographystyle{plain}
\bibliography{byzantine}

\end{document}